\newcommand{\etal}{{et~al.}}
\newcommand{\len}[1]{\mathrm{len}(#1)}
\title{Improved approximation ratios for two Euclidean maximum spanning tree problems
	%\thanks{This research is supported by NSERC.}
}
\author{Ahmad Biniaz\thanks{This research is supported by NSERC.}
	%\and Michiel Smid\thanks{School of Computer Science, Carleton University, michiel@scs.carleton.ca}
}
\affil{School of Computer Science, University of Windsor\\\texttt{ahmad.biniaz@gmail.com}}
\date{}
\newtheorem{lemma}{Lemma}
\newtheorem{theorem}{Theorem}
\newtheorem{observation}{Observation}
\newtheorem*{problem*}{Problem}
\newtheorem*{claim*}{Claim}
\newtheorem*{invariant*}{Invariant}
\renewcommand\footnotemark{}
\begin{document}
	\maketitle
	\vspace{-10pt}
	\begin{abstract}
		We study the following two maximization problems related to spanning trees in the Euclidean plane. It is not known whether or not these problems are NP-hard. We present approximation algorithms with better approximation ratios for both problems. The improved ratios are obtained mainly by employing the Steiner ratio, which has not been used in this context earlier.
		 
		(i) {\em Longest noncrossing spanning tree}: Given a set of points in the plane, the goal is to find a maximum-length noncrossing spanning tree. Alon, Rajagopalan, and Suri (SoCG 1993) studied this problem for the first time and gave a $0.5$-approximation algorithm. Over the years, the approximation ratio has been successively improved to $0.502$, $0.503$, and to $0.512$ which is the current best ratio, due to Cabello \etal. We revisit this problem and improve the ratio further to $0.519$. The improvement is achieved by a collection of ideas, some from previous works and some new ideas (including the use of the Steiner ratio), along with a more refined analysis. 	 
		 
		(ii) {\em Longest spanning tree with neighborhoods}: Given a collection of regions (neighborhoods) in the plane, the goal is to select a point in each neighborhood so that the longest spanning tree on selected points has maximum length. We present an algorithm with approximation ratio $0.524$ for this problem. The previous best ratio, due to Chen and Dumitrescu, is $0.511$ which is in turn the first improvement beyond the trivial ratio $0.5$. Our algorithm is fairly simple, its analysis is relatively short, and it takes linear time after computing a diametral pair of points. The simplicity comes from the fact that our solution belongs to a set containing three stars and one double-star. The shortness and the improvement come from the use of the Steiner ratio.
		
	\end{abstract}
\section{Introduction}

Spanning tree is a well-studied and fundamental structure in graph theory and combinations. The well-known minimum spanning tree (Min-ST) problem asks for a spanning tree of a weighted graph, with minimum total edge weight. In contrast, the maximum spanning tree (Max-ST) problem asks for a spanning tree with maximum total edge weight.
In the context of abstract graphs, the two problems are algorithmically equivalent in the sense that an algorithm for finding a Min-ST can also find a Max-ST within the same time bound (by simply negating the edge weights), and vice versa. The situation is quite different in the context of geometric graphs where vertices are points in the plane and edge wights are Euclidean distances between pairs of points. An algorithm that uses the geometry of the Euclidean plane for finding a Min-ST may not be useful for computing a Max-ST because there is no known geometric transformation for setting up a duality between the ``nearest'' and the ``farthest'' relations between points \cite{Monma1990}. In fact the existing geometric algorithms for the Min-ST and Max-ST problems exploit different sets of techniques.  

\begin{figure}[htb]
	\centering
	\setlength{\tabcolsep}{0in}
	$\begin{tabular}{cccc}
	\multicolumn{1}{m{.25\columnwidth}}{\centering\includegraphics[width=.17\columnwidth]{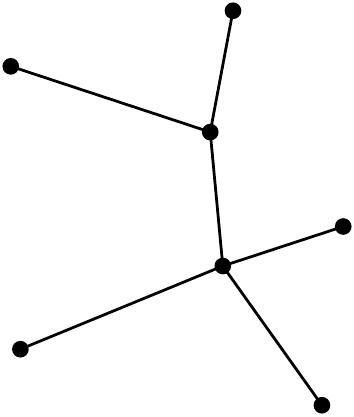}}
	&\multicolumn{1}{m{.25\columnwidth}}{\centering\vspace{0pt}\includegraphics[width=.17\columnwidth]{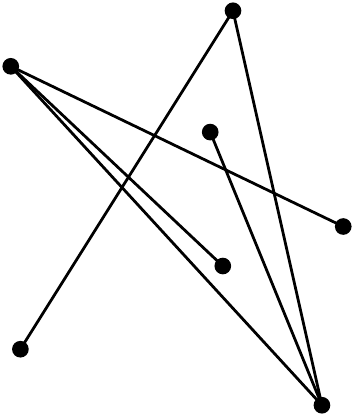}}
	&\multicolumn{1}{m{.25\columnwidth}}{\centering\includegraphics[width=.17\columnwidth]{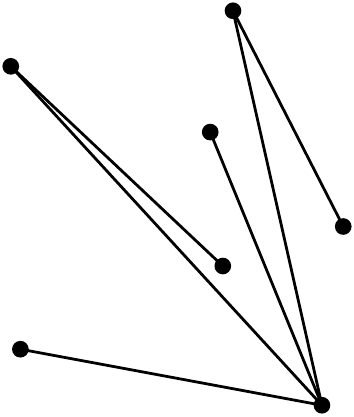}}
	&\multicolumn{1}{m{.25\columnwidth}}{\centering\vspace{0pt}\includegraphics[width=.19\columnwidth]{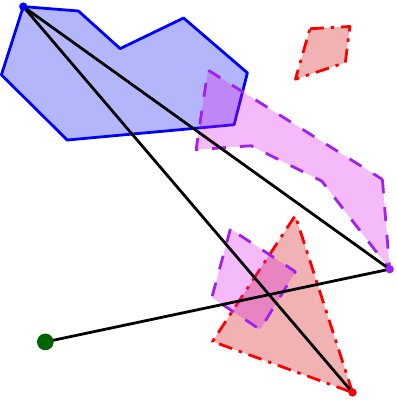}}
	\\
	(a) Min-ST&(b) Max-ST &(c) Max-NC-ST&(d) Max-ST-NB
	\end{tabular}$
	\caption{(a) minimum spanning tree, (b) maximum spanning tree, (c) longest noncrossing spanning tree, and (d) longest spanning tree with four neighborhoods colored red, green, blue, and purple.}
	\label{setting-fig}
\end{figure}

The problems of computing spanning trees with enforced properties (such as having minimum weight, maximum weight, bounded degree, or being noncrossing) have been well-studied in the last decades for both abstract graphs and geometric graphs.
We study two problems related to maximum spanning trees in geometric graphs. The maximum spanning tree and related problems, in addition to their fundamental nature, find applications in worst-case analysis of heuristics for various problems in combinatorial optimization \cite{Alon1995}, and in approminating maximum triangulations \cite[pp.~338]{Bern1996}. They also have applications in cluster analysis, where one needs to partition a set of entities into well-separated and homogeneous clusters \cite{Asano1988, Monma1990}. Maximum spanning trees are directly related to computing diameter and farthest neighbors which are fundamental problems in computational geometry, with many applications \cite{Agarwal1991}.

In the classical Euclidean Max-ST problem we are given a set of $n$ points in the plane (as vertices) and we want to find a spanning tree of maximum total edge length, where the length of every edge is the Euclidean distance between its two endpoints; see Figure~\ref{setting-fig}. This problem can be solved in $O(n^2)$ time by Prim's algorithm \cite{Fredman1987} for abstract graphs, and in $O(n\log n)$ time by an algorithm that uses the geometry \cite{Monma1990}. In contrast to the Euclidean Min-ST which is always noncrossing (because of the triangle inequality), the Euclidean Max-ST is almost always self-crossing. 
One problem that we study in this paper is the {\em longest noncrossing spanning tree} (Max-NC-ST) problem which is to compute a noncrossing spanning tree of maximum length, as depicted in Figure~\ref{setting-fig}. It is not known whether or not this problem is NP-hard.

Another problem that we study is the {\em longest spanning tree with neighborhoods} (Max-ST-NB): Given a collection of $n$ regions (neighborhoods) in the plane, we want to find a maximum-length tree that connects $n$ representative points, one point from each polygon, as in Figure~\ref{setting-fig}. We emphasis that the tree should contain exactly one point from each neighborhood.
Each {\em neighborhood} is the union of simple polygons, and the neighborhoods are not necessarily disjoint. The neighborhoods are assumed to be colored by $n$ different colors. 
The hardness of the Max-ST-NB problem is open. The difficulty lies in choosing the representative points; once these points are selected, the problem is reduced to the Euclidean Max-ST problem.
\subsection{Related work on the longest noncrossing spanning tree} 
Inspired by the seminal work of Alon, Rajagopalan, and Suri in SoCG 1993 \cite{Alon1995}, the study of long noncrossing configurations in the plane has received considerable attention in recent years. Alon~\etal~show how to compute constant-factor approximations of longest noncrossing spanning trees, perfect matchings, and Hamiltonian paths.
They show that the longest {\em star}, a tree in which one vertex is connected to all others, always gives a $0.5$ approximation of the longest spanning tree (a short proof of this claim is given in \cite{Dumitrescu2010}). As pointed out by Alon \etal~the ratio $0.5$ between the lengths of a longest star and a longest (possibly crossing) spanning tree is the best possible (in the limit); this can be verified by placing $n/2$ points at arbitrary small neighborhood around $(0,0)$ and $n/2$ points at arbitrary small neighborhood around $(1,0)$. Therefore, to obtain a better approximation ratio one should take into account spanning trees other than stars. The ratio $0.5$ remained the best known for almost seventeen years until Dumitrescu and
T{\'{o}}th (STACS 2010) \cite{Dumitrescu2010} slightly improved it to $0.502$, which was then improved to $0.503$ by Biniaz \etal~\cite{Biniaz2019}. The ratios $0.5$, $0.502$, and $0.503$ are obtained by considering the length of a longest (possibly crossing) spanning tree as the upper bound. Although such a tree provides a safe upper bound, it is not a valid solution for the Max-NC-ST problem. Alon~\etal~show that if the longest crossing spanning tree were to used as the upper bound then the approximation ratio cannot be improved beyond $2/\pi<0.637$ because such a tree can be $\pi/2$ times longer than a longest noncrossing spanning tree.
Recently, Cabello~\etal~\cite{Cabello2020} employed a longest noncrossing spanning tree as the upper bound and obtained a relatively significant improved ratio $0.512$.

The survey article by Eppstein \cite[pp.~439]{Eppstein2000} lists the hardness of Max-NC-ST as an open problem in the context of geometric network optimization. This problem has also been studied also for other structures. Alon \etal~show approximation ratios $2/\pi$ and $1/\pi$ for the longest noncrossing perfect matching and Hamiltonian path, respectively. The ratio for the Hamiltonian path was improved to $2/(1+\pi)$ by Dumitrescu and
T{\'{o}}th who also gave the first (tho non constant-factor) approximation algorithm for the longest noncrossing Hamiltonian cycle. The longest noncrossing spanning tree is also studied in multipartite geometric graphs \cite{Biniaz2019}. 

\subsection{Related work on the longest spanning tree with neighborhoods} 

The Max-ST-NB problem has the same flavor as the Euclidean group Steiner tree problem in which we are given $n$ groups of points in the plane and the goal is to find a shortest tree that contains ``at least'' one point from each group. The general group Steiner tree problem is NP-hard and cannot be approximated by a factor $O(\log^{2-\epsilon} n)$ for any $\epsilon>0$ \cite{Halperin2003}.
The Max-ST-NB problem also lies in the concept of imprecision in computational geometry where each input point provided as a region of uncertainty and the exact position of the point may be anywhere in the region; see e.g. \cite{Dorrigiv2015, Loffler2010}.

Similar to that of Max-NC-ST, one can show that the longest star, in which one vertex of a polygon is connected to one vertex in all other polygons, achieves a $0.5$-approximate solution for the Max-ST-NB problem. Recently, Chen and Dumitrescu \cite{Chen2018} present an approximation algorithm with improved ratio $0.511$. Although their algorithm is simple, the analysis of its ratio is thoroughly involved. They also show that the approximation ratio of an algorithm that always includes a bichromatic diametral pair as an edge in the solution cannot be better than $\sqrt{2-\sqrt{3}}\approx 0.517$. 

Analogous problem has been studied for other structures with neighborhoods, e.g., minimum spanning tree \cite{Blanco2017, Dorrigiv2015, Yang2007}, traveling salesman tour \cite{Arkin1994, Mitchell2007, Mitchell2010}, convex hull \cite{Loffler2010, Kreveld2008}, to name a few. 

\subsection{Our contributions and approach}

We report improved approximation ratios for the Max-NC-ST and the Max-ST-NB problems. Our results are summarized in the following theorems.

\begin{theorem}
	\label{neighborhood-thr}
	A $0.524$-approximation for the longest spanning tree with neighborhoods can be computed in linear time after computing a bichromatic diameter.
\end{theorem}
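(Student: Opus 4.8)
\section*{Proof proposal}

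The plan is to exhibit four feasible solutions---three stars and one double-star---and to argue that the longest of them is within a factor $0.524$ of the optimum. Write $\mathrm{OPT}$ for the length of a longest spanning tree over all choices of representatives, let $a\in N_i$ and $b\in N_j$ be the bichromatic diameter pair, and set $d=|ab|$. Since every edge of a feasible tree joins points of two distinct neighborhoods, its length is at most $d$, so $\mathrm{OPT}\le (n-1)d$. The three stars are centered at $a$, at $b$, and at a third representative $c$ chosen to maximize $d(a,c)+d(b,c)$; in each remaining neighborhood the representative is taken to be the point farthest from the relevant center, which is legitimate because a star fixes only one center and the choices in the other neighborhoods are independent. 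The double-star $DS$ has centers $a,b$ (joined by the diameter edge) and attaches, in every other neighborhood, the point maximizing $\max(d(a,\cdot),d(b,\cdot))$ to the farther of the two centers. All four trees are feasible and, once $a,b,c$ are fixed, each is computed by $O(1)$ farthest-point evaluations per neighborhood, for $O(n)$ time in total.

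The first ingredient is the two-star bound: for every representative $p$ one has $d(a,p)+d(b,p)\ge d$, and since the stars pick farthest points this yields $|S_a|+|S_b|\ge nd$, hence $\max(|S_a|,|S_b|)\ge nd/2$, the trivial $0.5$ guarantee. The improvement comes from two further observations. First, the double-star is optimal when the points are collinear: for points on the segment $ab$ the longest spanning tree attaches each interior point to its farther endpoint, so $|DS|=\mathrm{OPT}$ in that degenerate case, and by continuity $|DS|$ stays close to $\mathrm{OPT}$ when the point set is thin. Second, for a genuinely two-dimensional configuration I invoke the Steiner ratio: the sum of distances from any point to $a,b,c$ is at least $\mathrm{SMT}(a,b,c)$ (attained at the Fermat point), and $\mathrm{SMT}(a,b,c)\ge\frac{\sqrt3}{2}\,\mathrm{MST}(a,b,c)$. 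Because $ab$ is the longest side of the triangle, $\mathrm{MST}(a,b,c)=d(a,c)+d(b,c)$, so summing over neighborhoods gives $|S_a|+|S_b|+|S_c|\ge n\cdot\frac{\sqrt3}{2}\,\bigl(d(a,c)+d(b,c)\bigr)$ and therefore $\max(|S_a|,|S_b|,|S_c|)\ge \frac{n}{2\sqrt3}\,\bigl(d(a,c)+d(b,c)\bigr)$.

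Let $\mu=\bigl(d(a,c)+d(b,c)\bigr)/d\in[1,2]$ measure how far the set is from being collinear. The three-star bound gives ratio at least $\mu/(2\sqrt3)$, which already exceeds $0.524$ once $\mu\gtrsim 1.82$, i.e.\ for sufficiently ``fat'' inputs. The remaining work is the thin regime $\mu\lesssim 1.82$: here I would bound $\mathrm{OPT}$ more tightly than $(n-1)d$ using the smallness of the width, and combine this with the two-star bound and the near-optimality of $DS$ to show that $\max(|S_a|,|S_b|,|DS|)\ge 0.524\,\mathrm{OPT}$. The main obstacle is precisely this thin-regime analysis: one must track the two-dimensional spread of the points---not merely their projections onto $ab$---quantify how much the width inflates $\mathrm{OPT}$ above $|DS|$, and solve the resulting small optimization so that its minimum matches, at the crossover $\mu\approx1.82$, the value $0.524$ delivered by the Steiner-ratio bound. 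Taking the better of the four candidate trees then yields the claimed ratio, and the whole computation is linear once the bichromatic diameter (and the auxiliary point $c$) are available.
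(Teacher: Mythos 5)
Your overall architecture (three stars plus a double-star, the Steiner ratio for the ``spread-out'' case, a sharpened upper bound on the optimum in the remaining case) does mirror the paper's, but there are genuine gaps at the two places where the real work happens. First, you apply the Steiner ratio to the single triple $\{a,b,c\}$ and obtain $\max\{\len{S_a},\len{S_b},\len{S_c}\}\geqslant \frac{n}{2\sqrt{3}}\left(|ac|+|bc|\right)$, which beats $0.524$ only when $|ac|+|bc|\geqslant 2\sqrt{3}\cdot 0.524\approx 1.815$. The paper instead applies the ratio pointwise to triples $\{a,b,q\}$ with $q$ ranging over a region $Q$ cut out by an ellipse with foci $a$ and $b$, and---crucially---lower-bounds the minimum spanning tree of $\{a,b,q\}$ using the unit edge $ab$ itself when $q$ lies outside the lens $D(a,1)\cap D(b,1)$, so that only $|aq|>\omega\approx 0.815$ is needed rather than both distances to $q$ being large. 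Your identity ``the minimum spanning tree of $\{a,b,c\}$ equals $|ac|+|bc|$'' also silently assumes $ab$ is the longest side of the triangle, which fails when $c$ lies in the same neighborhood as $a$ or $b$ (then $|ac|$ can be as large as $2|ab|$). That same possibility is why the paper carries two additional stars centered at the vertices $a'\in X_1$ and $b'\in X_2$ farthest from $a$ and $b$ (Lemma~\ref{aa-bb-large}); your candidate set cannot cover this case, since $S_a$ and $S_b$ are dominated by the double-star and hence add nothing.

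Second, the ``thin regime'' you defer is precisely the hard part, and the route you sketch would fail. The claim that the double-star is optimal for collinear inputs, with near-optimality following ``by continuity,'' is refuted by the paper's own tight example for algorithms that include the bichromatic diameter: four collinear points with $X_1=\{p_0,p_3\}$ force any tree containing the diametral edge---in particular your $DS$---to be a factor $2$ away from optimal. More structurally, a single scalar $\mu=\left(|ac|+|bc|\right)/d$ cannot drive the needed improvement of the bound $\len{T^*}\leqslant (n-1)d$: the paper's Lemma~\ref{Q-empty} obtains its improved upper bound by counting the number $m$ of neighborhoods lying entirely inside the small lens $D(a,\delta)\cap D(b,\delta)$---each such neighborhood contributes an edge of length at most $|lf|<0.95$ to $T^*$ but at least $0.5$ to $D$---and balancing $m$ on both sides of the inequality. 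Without that per-neighborhood accounting, or some substitute for it, your two bounds do not meet at $0.524$ (at the natural crossover they meet far below $0.5$), so the claimed crossover at $\mu\approx 1.82$ is asserted rather than established.
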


\begin{theorem}
	\label{noncrossing-thr}
	A $0.519$-approximation for the longest noncrossing spanning tree can be computed in polynomial time.
\end{theorem}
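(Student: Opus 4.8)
The plan is to follow the template that already yields the $0.5$ bound and to sharpen every inequality in it using the diameter and the Steiner ratio. First I would compute a diametral pair $\{a,b\}$ with $D=|ab|$, so that every edge of every spanning tree has length at most $D$ and every input point lies in the lens $D_a\cap D_b$, the intersection of the two disks of radius $D$ centred at $a$ and $b$. Writing $W=\sum_{p}(|ap|+|bp|)$, the two stars $S_a,S_b$ centred at $a$ and $b$ are noncrossing and satisfy $\len{S_a}+\len{S_b}=W$, so the longer of them has length at least $W/2$. This reproves the $0.5$ bound and, more importantly, isolates the single quantity $W$ against which the whole analysis will be calibrated.

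To move beyond $1/2$ I would enlarge the pool of \emph{noncrossing} candidate trees. Besides $S_a$ and $S_b$ I would include double-stars built by splitting the points with the perpendicular bisector of $ab$ into a set $P_a$ nearer $a$ and a set $P_b$ nearer $b$, attaching each point to its nearer endpoint, and adding the edge $ab$; since $P_a\cup\{a\}$ lies in one closed halfplane of the bisector and $P_b\cup\{b\}$ in the other, this tree is provably planar, and its length is $D+m$ with $m=\sum_p\min(|ap|,|bp|)$. The output is the longest tree in the family, so its length is at least $\max\{W/2,\ D+m\}$. The genuine difficulty of the noncrossing setting already surfaces here: the much longer ``farther-endpoint'' double-star of length $D+\sum_p\max(|ap|,|bp|)$ is in general self-crossing and cannot be used, so the surplus over $W/2$ must be recovered more delicately.

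The crux is a matching upper bound on the longest noncrossing spanning tree $T^\ast$, for which I would benchmark against $T^\ast$ itself rather than against the longest crossing tree, as Cabello~\etal~do, since the crossing optimum caps the achievable ratio at $2/\pi$. Charging each edge $uv$ of $T^\ast$ to the star edges via $|uv|\le |ua|+|av|$ and $|uv|\le |ub|+|bv|$ and averaging produces a first bound of the shape $\len{T^\ast}\le c_1 W+c_2 D$. This is exactly where the Steiner ratio enters: the lens constraints $|ap|+|bp|\ge D$ and $|ap|,|bp|\le D$ force any short triple of points joined in $T^\ast$ to sit in a near-equilateral position, where the Fermat/Steiner point argument shows that the accumulated tree length cannot exceed the $\tfrac{2}{\sqrt3}$-multiple predicted by the planar Steiner ratio $\sqrt3/2$ (exact for triples). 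Feeding this into the charging estimate tightens the coefficient $c_1$ below the value a naive triangle-inequality charge would give.

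Finally I would close the analysis by a case distinction on how the points distribute around the bisector, equivalently on the value of $m$ relative to $W$. When the points cluster near the bisector, $m$ is large, the noncrossing double-star of length $D+m$ wins, and the diameter edge supplies the decisive surplus; when the points spread toward $a$ and $b$, $m$ is small and the better single star of length $\ge W/2$ carries the bound. Balancing the two regimes against the Steiner-sharpened upper bound $c_1W+c_2D$ is what yields the constant $0.519$. I expect the main obstacle to be precisely this upper bound: making the Steiner-ratio estimate rigorous for an arbitrary noncrossing tree rather than for a minimum Steiner tree, and doing so while keeping every candidate in the pool provably planar, so that the improved coefficient actually survives the worst configuration of the case analysis.
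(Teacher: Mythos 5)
There is a genuine gap: the candidate family you propose cannot beat $1/2$ in the hard configurations. Your only addition to the two stars $S_a,S_b$ is the nearer-endpoint double-star of length $D+m$ with $m=\sum_p\min(|ap|,|bp|)\leqslant W/2$, so it exceeds $\max\{\len{S_a},\len{S_b}\}\geqslant W/2$ by at most a single edge length $D$, which is negligible as $n\to\infty$. Hence everything would have to come from your upper bound $\len{T^\ast}\leqslant c_1W+c_2D$, and the argument you sketch for it is not sound: the charging step $|uv|\leqslant|ua|+|av|$ is the plain triangle inequality and applies equally to the \emph{crossing} optimum, while the claim that lens constraints force triples of $T^\ast$ into ``near-equilateral position'' where a Fermat-point bound caps the accumulated length is not a proof and does not use the planarity of $T^\ast$ anywhere. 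The paper uses the Steiner ratio in the opposite direction, as a \emph{lower} bound for building a long tree: for a point $q$ far from both $a$ and $b$ (outside an ellipse with foci $a,b$), $\len{\textrm{Min-ST}(\{a,b,q\})}$ is provably large, so by the three-point Steiner ratio every $p$ satisfies $|pa|+|pb|+|pq|\geqslant\frac{\sqrt3}{2}\len{\textrm{Min-ST}(\{a,b,q\})}>3\delta$, and averaging over all points shows one of the stars centered at $a$, $b$, or $q$ has length greater than $\delta(n-1)$. You have no analogue of this, nor of the trees $T_a,T_b$ of larger diameter (built by a radial-order construction joining the left and right thirds of the lens) that handle the regime where few points lie in the middle strip.

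The second missing idea is guessing a longest edge $ab$ of $T^\ast$ over all $O(n^2)$ pairs, which is what actually lets one benchmark against the noncrossing optimum. With $ab$ known one gets the sharper bound $\len{T^\ast}\leqslant(n-1)|ab|$, a clean case when $|ab|$ is small (the diameter stars already win), and, crucially, the geometric fact that no edge of $T^\ast$ may cross the segment $ab\in T^\ast$; this caps the length of every $T^\ast$-edge emanating from the middle strip at roughly $0.914$ and is precisely what makes $S_u,S_v$ sufficient when that strip is crowded. Working only with the diametral pair, as you do, you never obtain any constraint on $T^\ast$ beyond $\len{T^\ast}\leqslant(n-1)D$, so the noncrossing hypothesis on $T^\ast$ is never actually used and the analysis collapses back to the $0.5$ bound.
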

%We present algorithms with better approximation ratios for both the Max-NC-ST and the Max-ST-NB problems. 
The new approximation ratios are obtained mainly by employing the Euclidean Steiner ratio, which has not been used in this context earlier. The employment is not straightforward. We use the Steiner ratio to handle a situation where some points lie far from the diameter of the input. This situation is a bottleneck of previous algorithms, for both problems. To handle this situation we first obtain a lower bound for the length of the minimum spanning tree of a small subset of the input. We use this lower bound, along with the Steiner ratio, and obtain a lower bound on the length of the Steiner minimal tree of the subset. Then we use this new lower bound to construct a long spanning tree on the entire input. The employment of the Steiner ratio, not only improves the approximation ratios but also simplifies the analysis. To see why this employment is nontrivial, one may think of this counterintuitive question: how could a lower bound on the length of the ``minimum'' spanning tree lead to a constant-ratio approximation of the ``maximum'' spanning tree? 

For the Max-NC-ST problem we give a polynomial-time approximation algorithm with improved ratio $0.519$. Following the successively improved trend of the approximation ratio from $0.5$ \cite{Alon1995}, to $0.502$ \cite{Dumitrescu2010}, $0.503$ \cite{Biniaz2019}, and to $0.512$ \cite{Cabello2020} shows that even a small improvement requires a significant effort. To obtain the new ratio we borrow some ideas from previous works and combine them with some new ideas (including the use of the Steiner ratio) along with a more refined analysis. The ratios obtained by our algorithm and that of \cite{Cabello2020} are with respect to the longest ``noncrossing'' spanning tree while the ratios of \cite{Alon1995, Biniaz2019, Dumitrescu2010} are with respect to the longest spanning tree.

%of ratio $0.519$ is with respect to the longest noncrossing spanning tree, similar to that of Cabello~\etal. 

For the Max-ST-NB problem we give an approximation algorithm with improved ratio $0.524$.
The algorithm is not complicated: we find a bichromatic diameter (farthest pair of input vertices with different colors) and use it to compute three stars and one {\em double-star} (a tree of diameter 3), and then report the longest one. After computing a bichromatic diameter (which is a well-studied problem) the rest of the algorithm takes linear time. Our analysis of the the ratio $0.524$ is relatively short, compared to that of Chen and Dumitrescu \cite{Chen2018} for the ratio $0.511$. The shortness comes form again the use of the Steiner ratio which takes care of a bottleneck situation (described above)---Chen and Dumitrescu devoted a thorough analysis for handling this situation. 

As a secondary result, for the Max-ST-NB problem we show that the approximation ratio of an algorithm that always includes a bichromatic diameter in the solution cannot be better than $0.5$, thereby improving the previous upper bound $0.517$ by Chen and Dumitrescu.

\section{Preliminaries for the algorithms}
\label{preliminaries}
Both our algorithms make extensive use of trees with low diameter, such as stars and double-stars. In fact our solutions for the Max-ST-NB and the Max-NC-ST problems have diameters at most three and six, respectively. A {\em star}, centered at a vertex $p$, is a tree in which every edge is incident to $p$. A {\em double-star}, centered at two vertices $p$ and $q$, is a tree that contains the edge $pq$ and its every other edge is incident to either $p$ or $q$.

We denote the Euclidean distance between two points $p$ and $q$ in the plane by $|pq|$. A {\em geometric graph} is a graph whose vertices are points in the plane and whose edges are straight line segments. The length of a geometric graph $G$, denoted by $\len{G}$, is the total length of its edges. We denote by $D(p,r)$ the disk of radius $r$ that is centered at point $p$.

For a point set $P$, a {\em diametral pair} is a pair of points that attain the maximum distance. If the points in $P$ are colored, then a {\em bichromatic diametral pair} is defined as a pair of points with different colors that attain the maximum distance.

The {\em Euclidean Steiner tree} problem asks for the shortest connected geometric graph spanning a given set $P$ of points in the plane. The solution takes the form of a tree, called a Steiner minimal tree (SMT), that includes all points of $P$ along with zero or more extra vertices called {\em Steiner points} \cite{Bern1996}. The {\em Steiner ratio} is defined to be the infimum of the length of the Steiner minimal tree divided by the length of the minimum spanning tree, over all point sets in the plane:
\begin{linenomath*}
\[\rho=\inf_{P\subset \mathbb{R}^2}\left\{\frac{\len{\textrm{SMT}(P)}}{\len{\textrm{Min-ST}(P)}}\right\}.\]
\end{linenomath*}
An old conjecture of Gilbert and Pollak \cite{Gilbert1968} states that $\rho=\sqrt{3}/{2}\approx 0.866$; this is achieved when $P$ is the vertices of an equilateral triangle. Although the conjecture seems to be still open \cite{Innami2010, Ivanov2012}, it has been verified verified when $P$ has $3$ \cite{Gilbert1968}, $4$ \cite{Pollak1978}, $5$ \cite{Du1985}, or $6$ \cite{Rubinstein1991} points. For the purpose of our algorithms the original proof of Gilbert and Pollak for $|P|=3$ is enough.

\paragraph{A simple $0.5$-approximation algorithm.} Chen and Dumitrescu~\cite{Chen2018} gave the following simple $0.5$-approximation algorithm for the Max-ST-NB (a similar approach was previously used in \cite{Dumitrescu2010}).
Take a bichromatic diametral pair $(a,b)$ from the given $n$ neighborhoods; $a$ and $b$ belong to two different neighborhoods. Each edge of every optimal solution $T^*$ has length at most $|ab|$, and thus $\len{T^*}\leqslant (n-1)|ab|$. Pick an arbitrary point $p$ from each of the the other $n-2$ neighborhoods. Let $S_a$ be the star obtained by connecting $a$ to $b$ and all points $p$. Define $S_b$ analogously. By the triangle inequality  $\len{S_a}+\len{S_b}\geqslant n |ab|\geqslant \len{T^*}$. Therefore the longer of $S_a$ and $S_b$ is a $0.5$-approximate solution for the Max-ST-NB problem. This idea also achieves a $0.5$-approximate solution for the Max-NC-ST problem (for which every input point can be viewed as a neighborhood).

\section{Maximum spanning tree with neighborhoods}
\label{neighborhood-section}
In this section we prove Theorem~\ref{neighborhood-thr}. Put $\delta = 0.524$.

We describe our algorithm for the Max-ST-NB first, as it easier to understand. It also gives insights on a better understanding of our algorithm for the Max-NC-ST problem.
To facilitate comparisons we use the same notation as of Chen and Dumitrescu \cite{Chen2018}. Let $\mathcal{X}=\{X_1,X_2,\dots, X_n\}$ be the given collection of $n$ polygonal neighborhoods of $N$ total vertices. We may assume that each $X_i$ is colored by a unique color. We present a $\delta$-approximation algorithm for computing a longest spanning tree with neighborhoods in $\mathcal{X}$.
Our algorithm selects representative points only from boundary vertices of polygonal neighborhoods. Thus, in the algorithm (but not in the analysis) we consider each polygonal neighborhood $X_i$ as the set of its boundary vertices, and consequently we consider $\mathcal{X}$ as a collection of $N$ points colored by $n$ colors. 
Define the {\em longest spanning star} centered at a vertex $p\in X_i$ as the star connecting $p$ to its farthest vertex in every other neighborhood.

\paragraph{Algorithm.}The main idea of the algorithm is simple: we compute a spanning double-star $D$ and three spanning stars $S_1, S_2, S_3$, and then report the longest one. 

Let $(a,b)$ be a bichromatic diametral pair of $\mathcal{X}$. 
After a suitable relabeling assume that $a \in X_1$ and $b\in X_2$. We compute $D$ as follows. 
Add the edge $ab$ to $D$. For each $X_i$, with $i\in\{3,\dots,n\}$, find a vertex $p_i\in X_i$ that is farthest from $a$ and find a vertex $q_i\in X_i$ that is farthest from $b$ (it might be the case that $p_i=q_i$). If $|ap_i|\geqslant |bq_i|$ then add $ap_i$ to $D$ otherwise add $bq_i$ to $D$. Observe that $D$ spans all neighborhoods in $\mathcal{X}$, and each edge of $D$ has length at least $|ab|/2$. Now we introduce the three stars.
Let $a'$ be a vertex of $X_1$ that is farthest from $a$, and let $b'$ be a vertex of $X_2$ that is farthest from $b$. Notice that $a'$ has the same color as $a$, and $b'$ has the same color as $b$. We compute $S_1$ as the longest spanning star that is centered at $a'$, and we compute $S_2$ as the longest spanning star that is centered at $b'$. 
Now let $c$ be a vertex in $\mathcal{X}$ that maximizes $|ac|+|bc|$. The vertex $c$ can be in any of neighborhoods $X_1,\dots,X_n$, also it might be the case that $c=a'$ or $c=b'$. We compute $S_3$ as the longest spanning star that is centered at $c$. %This is the end of our algorithm.

\paragraph{Running time.}  It is implied by a result of Biniaz~\etal~\cite{Biniaz2018} that a bichromatic diametral pair of $\mathcal{X}$ can be found in $O(N\log N\log n)$ time (the algorithm of Bhattacharya and
Toussaint \cite{Bhattacharya1983} also computes a bichromatic diameter, but only for two-colored points). After finding $(a,b)$, the rest of the algorithm (finding $a', b', c$, and finding farthest points from $a$, $b$, $a'$, $b'$, $c$) takes $O(N)$ time.

\subsection{Analysis of the approximation ratio}
For the analysis we consider $\mathcal{X}$ as the initial collection of polygonal neighborhoods. Let $T^*$ denotes a longest spanning tree with neighborhoods in $\mathcal{X}$.
It is not hard to see that for any point in the plane, its farthest point in a polygon $P$ must be a  vertex of $P$ (see e.g. \cite[Chapter 7]{deBerg2008}). Thus, any bichromatic diameter of $\mathcal{X}$ is introduced by two vertices of polygons in $\mathcal{X}$. Hence the pair $(a,b)$, selected in the algorithm, is a bichromatic diametral pair of the initial collection $\mathcal{X}$. Therefore, $|ab|$ is an upper bound for the length of edges in $T^*$. Recall our assumption from the algorithm that $a\in X_1$ and $b\in X_2$.
After a suitable rotation, translation, and scaling assume that $ab$ is horizontal, $a=(0,0)$, and $b=(1,0)$. Since $|ab|=1$ and $T^*$ has $n-1$ edges,
\begin{linenomath*} 
\begin{equation}
\label{eq-upperbound-1}
\len{T^*}\leqslant (n-1)|ab|\leqslant n-1.
\end{equation}
\end{linenomath*}
\begin{lemma}
	\label{Sa-Sb}
	The double star $D$ is longer than any star that is centered at $a$ or at $b$.
\end{lemma}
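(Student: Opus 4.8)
The plan is to prove the stronger statement that $\len{D}$ is at least the length of the \emph{longest} spanning star centered at $a$, and likewise for $b$; since every star centered at $a$ is no longer than this longest one, the lemma follows. I would set up a term-by-term comparison between the edges of $D$ and the edges of the longest spanning star.

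Fix notation: for each neighborhood $X_i$ write $\alpha_i=\max_{v\in X_i}|av|$ and $\beta_i=\max_{v\in X_i}|bv|$, the distances from $a$ and $b$ to their farthest vertices in $X_i$. By the definition of the longest spanning star, the one centered at $a$ reaches every neighborhood except $X_1$, so its length is $\alpha_2+\sum_{i=3}^n\alpha_i$; similarly the longest spanning star centered at $b$ has length $\beta_1+\sum_{i=3}^n\beta_i$. On the other hand, $D$ consists of the edge $ab$ of length $1$ together with, for each $i\in\{3,\dots,n\}$, an edge of length $\max(\alpha_i,\beta_i)$, since $p_i$ and $q_i$ are the farthest vertices of $X_i$ from $a$ and $b$ respectively. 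Hence $\len{D}=1+\sum_{i=3}^n\max(\alpha_i,\beta_i)$.

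The crux is to account for the one neighborhood, $X_2$, that the star centered at $a$ connects to but that $D$ does not reach by a farthest-point edge. Here I would invoke the diameter property: every vertex of $X_2$ has a color different from that of $a\in X_1$, so $|av|\le|ab|=1$ for all $v\in X_2$, and this bound is attained by $b\in X_2$; therefore $\alpha_2=1$. This matches exactly the length of the edge $ab$ in $D$. Consequently $\len{D}=1+\sum_{i=3}^n\max(\alpha_i,\beta_i)\ge 1+\sum_{i=3}^n\alpha_i=\alpha_2+\sum_{i=3}^n\alpha_i$, which is the length of the longest spanning star centered at $a$. The symmetric argument, using $\beta_1=1$ in place of $\alpha_2=1$ and the inequality $\max(\alpha_i,\beta_i)\ge\beta_i$, handles the stars centered at $b$.

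I expect the only real subtlety to be this bookkeeping at $X_1$ and $X_2$: recognizing that the diameter edge $ab$ in $D$ plays the role of the missing farthest-point edge to $X_2$ (respectively $X_1$) in the star, and that the diameter property forces $\alpha_2=\beta_1=1$ so that this substitution loses nothing. Once that correspondence is in place, the domination $\max(\alpha_i,\beta_i)\ge\alpha_i$ (respectively $\ge\beta_i$) for the remaining indices is immediate, and the claimed inequality follows by summation.
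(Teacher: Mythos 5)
Your proof is correct and follows essentially the same approach as the paper: a term-by-term comparison in which the edge $ab$ of $D$ dominates the star's edge to $X_2$ (via the bichromatic diameter property) and the edge of length $\max(\alpha_i,\beta_i)$ dominates the star's edge to $X_i$ for $i\geqslant 3$. The paper simply compares $D$ against an arbitrary star centered at $a$ directly, whereas you route through the longest spanning star first; this is an immaterial difference.
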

\begin{proof}
Due to symmetry, we only prove this lemma for any star $S_a$ centered at $a$.
Recall that $D$ contains the edge $ab$ where $b\in X_2$, and for each $i\in\{3,\dots, n\}$ it contains the longer of $ap_i$ and $aq_i$ where $p_i$ and $q_i$ are the farthest vertices of $X_i$ from $a$ and $b$, respectively. 

For any $i\in\{2,\dots, n\}$ let $r_i$ be the vertex of $X_i$ that is connected to $a$ in $S_a$. If $i=2$ then $|ar_i|\leqslant |ab|$. If $i>2$ then $|ar_i|\leqslant \max\{|ap_i|,|bq_i|\}$. Therefore $\len{D}\geqslant \len{S_a}$.
\end{proof}

Recall $a'$ and $b'$ as vertices of $X_1$ and $X_2$ that are farthest from $a$ and $b$, respectively. 

\begin{lemma}
	\label{aa-bb-large}
	If $|aa'|\geqslant 2\delta$ or $|bb'|\geqslant 2\delta$ then $\max\{\len{S_1},\len{S_2},\len{D}\}\geqslant \delta\cdot \len{T^*}$.
\end{lemma}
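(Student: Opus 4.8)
The plan is to exploit the hypothesis $|aa'|\ge 2\delta$ (the case $|bb'|\ge 2\delta$ being symmetric, using $b'$ and $b$ in place of $a'$ and $a$) by pairing the star $S_1$ centered at $a'$ against a star centered at $a$, and then invoking Lemma~\ref{Sa-Sb} to upgrade that auxiliary star to $D$. The guiding observation is that $a'$ is forced to lie far from $a$, at distance at least $2\delta>1=|ab|$, so the two stars together can account for a full $|aa'|$ per neighborhood.

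First I would fix the representatives $t_2,\dots,t_n$ that $T^*$ selects in the neighborhoods $X_2,\dots,X_n$. Since $S_1$ is the longest spanning star centered at $a'$ (with $a'\in X_1$), each of its edges into $X_i$ has length at least $|a't_i|$, giving $\len{S_1}\ge\sum_{i=2}^n|a't_i|$. Likewise, writing $S_a$ for the longest spanning star centered at $a$, we have $\len{S_a}\ge\sum_{i=2}^n|at_i|$. Both sums range over exactly the $n-1$ neighborhoods $X_2,\dots,X_n$, with $X_1$ represented by $a'$ in $S_1$ and by $a$ in $S_a$.

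The key step is then a termwise triangle inequality: for each $i\in\{2,\dots,n\}$, the path $a'\to t_i\to a$ yields $|a't_i|+|at_i|\ge|aa'|\ge 2\delta$. Summing, $\len{S_1}+\len{S_a}\ge (n-1)|aa'|\ge 2\delta(n-1)$. By Lemma~\ref{Sa-Sb} we have $\len{D}\ge\len{S_a}$, so $\len{S_1}+\len{D}\ge 2\delta(n-1)$ and hence $\max\{\len{S_1},\len{D}\}\ge\delta(n-1)$. Finally, \eqref{eq-upperbound-1} gives $\len{T^*}\le n-1$, so $\delta(n-1)\ge\delta\cdot\len{T^*}$, which is the claimed bound.

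I do not expect a genuine obstacle here; the only nontrivial ingredient is spotting the right pairing. The essential trick is that Lemma~\ref{Sa-Sb} lets us trade the auxiliary star $S_a$---which the algorithm never actually outputs---for the double-star $D$, which it does output, so that the bound transfers to $\max\{\len{S_1},\len{S_2},\len{D}\}$. The one thing I would verify carefully is that $S_1$ and $S_a$ are compared against the same representatives $t_2,\dots,t_n$ and that neither sum accidentally includes or omits $X_1$, since the equality of the two index ranges is exactly what makes the termwise inequality $|a't_i|+|at_i|\ge|aa'|$ applicable to every summand.
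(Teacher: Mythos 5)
Your proof is correct and essentially identical to the paper's: both pair a star at $a'$ against a star at $a$, apply the triangle inequality termwise over a common set of representatives to get a combined length of at least $2\delta(n-1)$, and then use Lemma~\ref{Sa-Sb} to replace the auxiliary star at $a$ by $D$. The only cosmetic difference is that the paper uses arbitrary representatives $p_i$ rather than those of $T^*$, which changes nothing.
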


\begin{proof}
First assume that $|aa'|\geqslant 2\delta$.
Choose a vertex $p_i$ in each neighborhood $X_i$, with $i\neq1$, and connect it to $a$ and to $a'$. We obtain two stars $S_a$ and $S_{a'}$ that are centered at $a$ and $a'$. The total length of these stars is at least
\begin{linenomath*} \[\sum_{i\in\{2,\dots,n\}}(|ap_i|+|p_ia'|)\geqslant \sum_{i\in\{2,\dots,n\}}|aa'|\geqslant 2\delta (n-1),\]
\end{linenomath*} and thus the longer star has length at least $\delta(n - 1)$. If $S_{a'}$ is longer then $\len{S_1}\geqslant \len{S_{a'}}\geqslant \delta(n - 1)\geqslant \delta\cdot\len{T^*}$, where the first inequality is implied by the fact that $S_1$ is the longest spanning star centered at $a'$ and the last inequality is implied by \eqref{eq-upperbound-1}. If $S_{a}$ is longer then  $\len{D}\geqslant \len{S_{a}}\geqslant \delta(n - 1)\geqslant \delta\cdot\len{T^*}$, where the first inequality is implied by Lemma~\ref{Sa-Sb}.

If $|bb'|\geqslant 2\delta$, an analogous implication show that the length of $S_2$ or $D$ is at least $\delta\cdot\len{T^*}$.
\end{proof}

Having Lemma~\ref{aa-bb-large} in hand, in the rest if this section we turn our attention to the case where $|aa'|<2\delta$ and $|bb'|<2\delta$. 
The intersection of two disks is called a {\em lens}. Define lens $L= D(a,1)\cap D(b,1)$ to be the region of distance at most $1$ from $a$ and $b$. Since $(a,b)$ is a bichromatic diametral pair and $|ab|=1$, all vertices of $X_3,\dots, X_n$ lie in $L$, all vertices of $X_1$ lie in $D(b,1)$, and all vertices of $X_2$ lie in $D(a,1)$. Moreover since $|aa'|<2\delta$ all vertices of $X_1$ lies in lens $L_1=D(b,1)\cap D(a,2\delta)$, and since $|bb'|<2\delta$  all vertices of $X_2$ lie in lens $L_2=D(a,1)\cap D(b,2\delta)$. See Figure~\ref{lens}(a) for an illustration. In this setting, all vertices of $\mathcal{X}$ lie in $L_1\cup L_2$.

\begin{figure}[htb]
	\centering
	\setlength{\tabcolsep}{0in}
	$\begin{tabular}{cc}
	\multicolumn{1}{m{.5\columnwidth}}{\centering\includegraphics[width=.34\columnwidth]{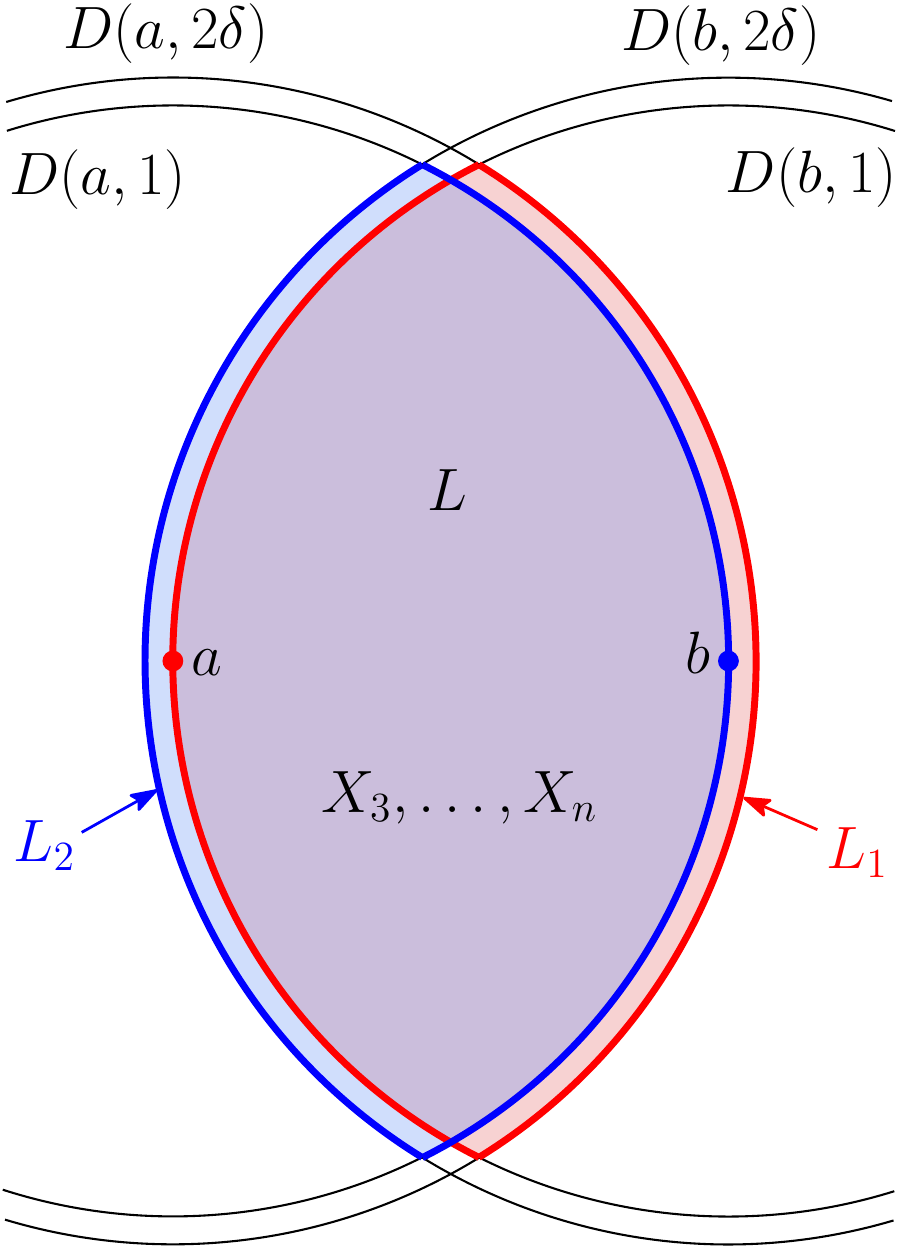}}
	&\multicolumn{1}{m{.5\columnwidth}}{\centering\vspace{0pt}\includegraphics[width=.4\columnwidth]{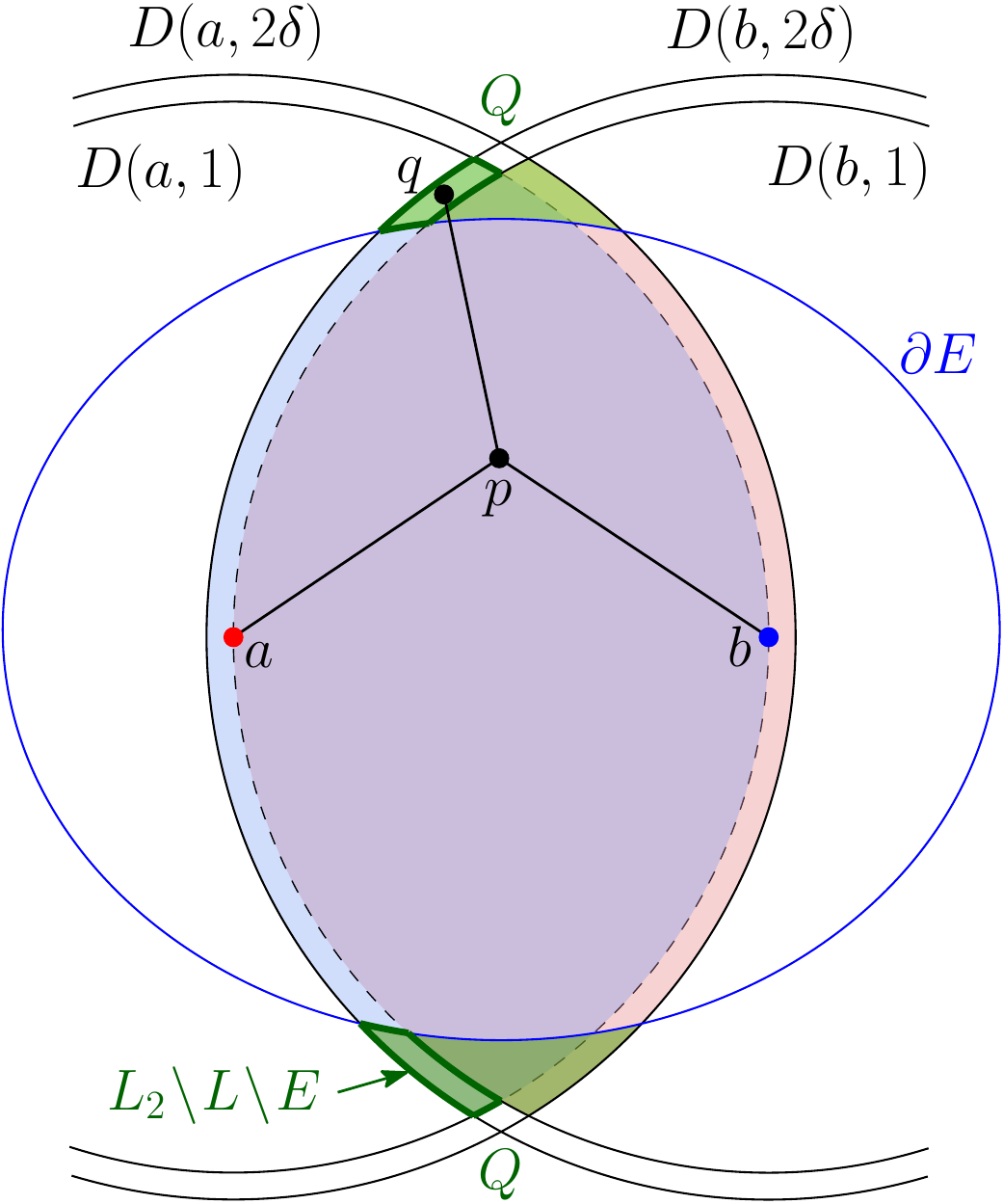}}
	\\
	(a)&(b)
	\end{tabular}$
	\caption{Illustration of (a) lenses $L$, $L_1$, $L_2$, and (b) the ellipse $\partial E$ and the region $Q$.}
	\label{lens}
\end{figure}

\iffalse
\begin{observation}
	\label{MST-obs}
	Let $q$ be a point in $L_{1}\cup L_{2}$. Consider the set $P=\{a,b,q\}$. If $q\in L$ then the MST of $P$ contains edges $aq$ and $bq$. If $q\in L_{2}\!\setminus\! L_1$ then the MST of $P$ contains edges $aq$ and $ab$. If $q\in L_{1}\!\setminus\! L_2$ then the MST of $P$ contains edges $bq$ and $ab$. 
\end{observation}
\fi
We fix a parameter $\omega=\frac{6\delta}{\sqrt{3}}-1\approx 0.815$.
Let $E$ be the set of all points whose total distance from $a$ and $b$ is at most $\omega+2\delta$, i.e., $E=\{x\in\mathbb{R}^2: |xa|+|xb|\leqslant \omega+2\delta\}$. The boundary $\partial E$ of $E$ is an ellipse with foci $a$ and $b$. Put $Q=(L_1
\cup L_2)\!\setminus\! E$, as depicted in Figure~\ref{lens}(b).  

\begin{lemma}
	\label{qa-qb}
	For any point $q\in Q$ it holds that $|aq|>\omega$ and $|bq|>\omega$.
\end{lemma}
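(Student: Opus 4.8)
The plan is to exploit the two defining constraints of $Q$ directly, with no geometry beyond the definitions. Since $Q=(L_1\cup L_2)\setminus E$, any $q\in Q$ lies in $L_1$ or in $L_2$, but not in $E$. The configuration is symmetric under the reflection across the perpendicular bisector of $ab$, which swaps $a=(0,0)$ with $b=(1,0)$, interchanges $L_1=D(b,1)\cap D(a,2\delta)$ with $L_2=D(a,1)\cap D(b,2\delta)$, and fixes $E$ (whose defining quantity $|xa|+|xb|$ is symmetric in $a,b$). Hence it suffices to treat the case $q\in L_1$.

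First I would record the two distance bounds that come from membership in the lens $L_1=D(b,1)\cap D(a,2\delta)$, namely $|aq|\leqslant 2\delta$ and $|bq|\leqslant 1$. Next, from $q\notin E$ I obtain the sum bound $|aq|+|bq|>\omega+2\delta$, since $E$ is precisely the closed ellipse $\{x:|xa|+|xb|\leqslant\omega+2\delta\}$ and $q$ lies strictly outside it.

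Then I would combine the three inequalities. Using $|aq|\leqslant 2\delta$ in the sum bound gives $|bq|>(\omega+2\delta)-|aq|\geqslant(\omega+2\delta)-2\delta=\omega$, which is the first desired inequality. For the second, using $|bq|\leqslant 1$ gives $|aq|>(\omega+2\delta)-|bq|\geqslant(\omega+2\delta)-1=\omega+(2\delta-1)$. Here the only quantitative input enters: because $\delta=0.524>\tfrac12$ we have $2\delta-1>0$, so $|aq|>\omega$ as well. The case $q\in L_2$ is identical after exchanging the roles of $a$ and $b$.

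The argument is essentially a two-line manipulation of three inequalities, so I do not expect a genuine obstacle. The only point requiring care is not to stop after deriving $|bq|>\omega$ (the more immediate bound, since the radius-$2\delta$ disk is the binding constraint inside $L_1$): the complementary inequality $|aq|>\omega$ is slightly less automatic and rests on the fact that the radius $2\delta$ exceeds $|ab|=1$, i.e. on $\delta>\tfrac12$. This is precisely why the lemma holds for the chosen $\delta$ but would degenerate if $\delta$ were allowed to drop to $\tfrac12$.
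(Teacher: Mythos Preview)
Your proof is correct and follows essentially the same idea as the paper's: combine the upper bounds on $|aq|$ and $|bq|$ coming from membership in $L_1\cup L_2$ with the lower bound $|aq|+|bq|>\omega+2\delta$ coming from $q\notin E$. The only cosmetic difference is that the paper avoids the case split by observing uniformly that $q\in L_1\cup L_2$ implies both $|aq|\leqslant 2\delta$ and $|bq|\leqslant 2\delta$ (since $1<2\delta$), which makes the conclusion symmetric and absorbs your explicit use of $\delta>\tfrac12$ into that single containment.
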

\begin{proof}
The point $q$ lies in $L_1\cup L_2$, and thus $|aq|\leqslant 2\delta$ and $|bq|\leqslant 2\delta$. The point $q$ does not lie in $E$, and hence $|aq|+|bq|>\omega+2\delta$. Combining these inequalities yields $|aq|>\omega$ and $|bq|>\omega$.
\end{proof}

The next ``helper lemma'' is the place where we use the Steiner ratio to obtain a lower bound on the length of the Steiner minimal tree of a subset of input vertices.

\begin{lemma}
	\label{Steiner-lemma}
	For any two points $q\in Q$ and $p\in\mathbb{R}^2$ it holds that $|pa|+|pb|+|pq|> 3\delta$. 
\end{lemma}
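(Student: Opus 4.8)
The plan is to fix $q\in Q$ and $p\in\mathbb{R}^2$ and observe that $|pa|+|pb|+|pq|$ is exactly the total distance from $p$ to the three points of the set $P=\{a,b,q\}$. Over all choices of $p$ this total distance is minimized at the Fermat (Torricelli) point of $P$, and the minimum value equals $\len{\textrm{SMT}(P)}$. Hence for every $p$ we have $|pa|+|pb|+|pq|\geqslant \len{\textrm{SMT}(P)}$, so it suffices to prove $\len{\textrm{SMT}(P)}>3\delta$. This is where the Steiner ratio enters ``backwards'': since $|P|=3$, the Gilbert--Pollak bound gives $\len{\textrm{SMT}(P)}\geqslant \rho\cdot\len{\textrm{Min-ST}(P)}=\tfrac{\sqrt3}{2}\len{\textrm{Min-ST}(P)}$, reducing the task to a lower bound on the length of the \emph{minimum} spanning tree of the three points.

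Next I would lower-bound $\len{\textrm{Min-ST}(P)}$, which for three points is the sum of the two smallest of the pairwise distances $|ab|=1$, $|aq|$, and $|bq|$. Lemma~\ref{qa-qb} gives $|aq|>\omega$ and $|bq|>\omega$, and membership $q\notin E$ gives the sharper bound $|aq|+|bq|>\omega+2\delta$. I split into cases according to which distance is largest. If $|ab|$ is largest, then the two smallest distances are $|aq|$ and $|bq|$, so $\len{\textrm{Min-ST}(P)}=|aq|+|bq|>\omega+2\delta>\omega+1$, where the last step uses $2\delta>1$. Otherwise one of $|aq|,|bq|$ is largest, the minimum spanning tree uses the edge $ab$ together with the shorter of $aq,bq$, and $\len{\textrm{Min-ST}(P)}=1+\min\{|aq|,|bq|\}>1+\omega$. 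In every case $\len{\textrm{Min-ST}(P)}>\omega+1$.

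Combining the two steps, $|pa|+|pb|+|pq|\geqslant \tfrac{\sqrt3}{2}\len{\textrm{Min-ST}(P)}>\tfrac{\sqrt3}{2}(\omega+1)$. Finally I would substitute the definition $\omega=\tfrac{6\delta}{\sqrt3}-1$, so that $\omega+1=\tfrac{6\delta}{\sqrt3}$ and $\tfrac{\sqrt3}{2}(\omega+1)=3\delta$, yielding the desired strict inequality $|pa|+|pb|+|pq|>3\delta$. The one genuinely delicate point---and the reason the parameter $\omega$ is defined exactly as it is---is the constant bookkeeping: the case analysis must deliver the clean bound $\omega+1$ so that the factor $\sqrt3/2$ cancels against $\omega+1=6\delta/\sqrt3$ to leave precisely $3\delta$. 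I expect the main obstacle to be making the Min-ST case analysis airtight: in particular, confirming that in the case where $ab$ is the longest edge it is the ellipse constraint $q\notin E$ (giving $|aq|+|bq|>\omega+2\delta$), rather than merely $|aq|,|bq|>\omega$, that is needed, and that the condition $2\delta>1$ is exactly what makes this case clear the threshold $\omega+1$.
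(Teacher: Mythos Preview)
Your proposal is correct and follows essentially the same approach as the paper: bound $|pa|+|pb|+|pq|$ below by $\len{\mathrm{SMT}(\{a,b,q\})}$, apply the Steiner ratio for three points to reduce to $\len{\mathrm{Min\text{-}ST}(\{a,b,q\})}$, and then case-split to show this exceeds $\omega+1$. The only cosmetic difference is that the paper phrases the case analysis geometrically (according to whether $q$ lies in $L\!\setminus\!E$, $L_1\!\setminus\!L\!\setminus\!E$, or $L_2\!\setminus\!L\!\setminus\!E$) rather than by which of the three pairwise distances is largest, but these two decompositions are equivalent and the estimates in each case are identical.
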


\begin{proof}
Put $P=\{a,b,q\}$. We are going to obtain a lower bound for the Steiner minimal tree of $P$ by the minimum spanning tree of $P$. The point $q$ lies in $L_2\!\setminus\! L\!\setminus\! E$ or in $L_1\!\setminus\! L\!\setminus\! E$ or in 
$L\!\setminus\! E$.
If $q\in L_2\!\setminus\! L\!\setminus\! E$, then $bq$ is the longest side of the triangle $\bigtriangleup(a,b,q)$; this case is depicted in Figure~\ref{lens}(b). In this case Min-ST($P$) has edges $aq$ and $ab$. Thus, $\len{\textrm{Min-ST}(P)}=|aq|+|ab|> \omega + 1$ where the inequality holds by Lemma~\ref{qa-qb} and the fact that $|ab|=1$. If $q\in L_1\!\setminus\! L\!\setminus\! E$, then analogously we get $\len{\textrm{Min-ST}(P)}> \omega+1$. If
$q\in L\!\setminus\! E$, then $ab$ is the longest side of the triangle $\bigtriangleup(a,b,q)$, and hence  Min-ST($P$) has edges $aq$ and $bq$. Thus, $\len{\textrm{Min-ST}(P)}=|aq|+|bq|> \omega+2\delta>\omega+1$ where the first inequality holds because $q\notin E$. Therefore, in all cases we have $\len{\textrm{Min-ST}(P)}> \omega+1$.

The union of the three segments $pa$, $pb$, and $pq$ form a tree that connects the points of $P$, as in Figure~\ref{lens}(b). The length of this tree cannot be smaller than the length of Steiner minimal tree of $P$, and thus $|pa|+|pb|+|pq|\geqslant \len{\textrm{SMT}(P)}$.
By using the Steiner ratio for three points (proved by Gilbert and Pollak \cite{Gilbert1968}) we get 
\begin{linenomath*}
\[|pa|+|pb|+|pq|\geqslant \len{\textrm{SMT}(P)}\geqslant \frac{\sqrt{3}}{2}\cdot\len{\textrm{Min-ST}(P)}> \frac{\sqrt{3}}{2}\cdot (\omega+1)=3\delta.\qedhere
\]\end{linenomath*}
\end{proof}

The following two lemmas consider two cases depending on whether or not a vertex of $\mathcal{X}$ lies in $Q$. Both lemmas benefit from our helper Lemma~\ref{Steiner-lemma}. In Lemma~\ref{Q-not-empty} we use the helper lemma directly to obtain a lower bound on the maximum length of $S_3$ and $D$. In Lemma~\ref{Q-empty} we use the helper lemma indirectly to obtain a better upper bound on the length of $T^*$.

\begin{lemma}
	\label{Q-not-empty}
	If at least one vertex of $\mathcal{X}$ lies in $Q$ then $\max\{\len{S_3},\len{D}\}\geqslant \delta\cdot \len{T^*}$.
\end{lemma}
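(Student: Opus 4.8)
The plan is to exhibit three auxiliary spanning stars whose average length already exceeds $\delta(n-1)$, and then to charge two of them to $D$ and one to $S_3$. The cornerstone observation is that the center $c$ of $S_3$ itself lies in $Q$. Indeed, let $v$ be the hypothesized vertex of $\mathcal{X}$ in $Q$. Since $c$ maximizes $|ac|+|bc|$ over all vertices of $\mathcal{X}$ and $v\notin E$, we get $|ac|+|bc|\geqslant |av|+|bv|>\omega+2\delta$, so $c\notin E$; as $c$ is a vertex of $\mathcal{X}$ it lies in $L_1\cup L_2$ (recall that in the current regime all vertices of $\mathcal{X}$ do), and therefore $c\in(L_1\cup L_2)\setminus E=Q$. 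This is exactly what is needed in order to feed $c$ into the helper Lemma~\ref{Steiner-lemma}.

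Next I would pick one representative point $p_i\in X_i$ in every neighborhood, fixing $p_1=a$ and $p_2=b$, and form the three spanning stars $S_a$, $S_b$, $S_c$ centered at $a$, $b$, $c$ respectively, each connecting its center to the chosen representatives in all other neighborhoods. Applying Lemma~\ref{Steiner-lemma} with $q=c$ and $p=p_i$ gives $|p_ia|+|p_ib|+|p_ic|>3\delta$ for every $i$, and summing over all $n$ neighborhoods yields $\sum_i\left(|p_ia|+|p_ib|+|p_ic|\right)>3\delta n$. The three star lengths recover this sum except for the terms corresponding to each center's own neighborhood: because $p_1=a$ and $p_2=b$, the contributions of $X_1$ to $S_a$ and of $X_2$ to $S_b$ vanish, so the only possible loss is the single term $|cp_k|$, where $X_k$ is the neighborhood containing $c$.

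The one place that needs care is bounding this leftover term. If $c$ lies in a neighborhood other than $X_1$ and $X_2$, I simply set $p_k=c$ and the loss is zero. Otherwise $c\in X_1$ or $c\in X_2$, and since we are in the regime $|aa'|<2\delta$ and $|bb'|<2\delta$, every vertex of $X_1$ is within $2\delta$ of $a$ and every vertex of $X_2$ within $2\delta$ of $b$, so $|cp_k|<2\delta$. Hence in all cases $\len{S_a}+\len{S_b}+\len{S_c}>3\delta n-2\delta>3\delta(n-1)$, and consequently the longest of the three stars has length exceeding $\delta(n-1)$.

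Finally I would convert this into the stated bound. By Lemma~\ref{Sa-Sb} we have $\len{D}\geqslant\len{S_a}$ and $\len{D}\geqslant\len{S_b}$, while $\len{S_3}\geqslant\len{S_c}$ because $S_3$ is the longest spanning star centered at $c$. Combining these with the displayed average gives $\max\{\len{S_3},\len{D}\}\geqslant\max\{\len{S_a},\len{S_b},\len{S_c}\}>\delta(n-1)\geqslant\delta\cdot\len{T^*}$, where the last inequality is \eqref{eq-upperbound-1}. The conceptual heart of the argument, and the step I expect to be the main obstacle, is establishing that $c\in Q$: without it the Steiner-based Lemma~\ref{Steiner-lemma} cannot be invoked at the center of $S_3$ and the three-star averaging does not close. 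Everything after that---the summation and the bookkeeping of the single leftover term---is routine.
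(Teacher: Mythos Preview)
Your proof is correct and slightly cleaner than the paper's. Both arguments rest on the same Steiner-ratio helper (Lemma~\ref{Steiner-lemma}) and compare three stars centered at $a$, $b$, and a point of $Q$, but they differ in execution. The paper works with a generic vertex $q\in Q$, splits into three cases according to whether $q\in X_1$, $q\in X_2$, or neither, sums Lemma~\ref{Steiner-lemma} only over the remaining $n-2$ or $n-3$ neighborhoods, and then patches in the leftover edges using Lemma~\ref{qa-qb} (that $|qa|,|qb|>\omega>\delta$) and $|ab|=1$; only at the end does it remark that ``$c$ plays the role of $q$,'' which tacitly relies on $c\in Q$. You instead establish $c\in Q$ explicitly at the outset, which lets you sum Lemma~\ref{Steiner-lemma} over all $n$ representatives with $p_1=a$, $p_2=b$, incurring a single leftover term $|cp_k|$ that is bounded by the ambient hypotheses $|aa'|,|bb'|<2\delta$. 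This eliminates the three-case split and the appeal to Lemma~\ref{qa-qb}; the paper's route, in exchange, makes more transparent why the star at $q$ picks up the extra $2\omega$ (or $\omega+1$) directly from the geometry.
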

\begin{proof}
Let $q$ be any vertex of $\mathcal{X}$ in $Q$.
We have three cases: (i) $q\notin X_1$ and $q\notin X_2$, (ii) $q\in X_1$, (iii) $q\in X_2$. First consider case (i). After a suitable relabeling assume that $q\in X_3$. Consider an arbitrary representative vertex $p_i$ from each $X_i$ with $i\in\{4,\dots,n\}$. It is implied by  Lemma~\ref{Steiner-lemma} that
\begin{linenomath*}
\[\sum_{i=4}^{n}|p_ia|+|p_ib|+|p_iq|>3\delta(n-3).\]
\end{linenomath*}
Let $x$ denotes the point in $\{a,b,q\}$ that has the largest total distance to all points $p_i$. By bounding the maximum with the average, the total distance of $p_i$'s to $x$ is at least $\delta(n-3)$. If $x=q$, then the star that connects $q$ to $p_4, \dots, p_n$, $a$, and $b$ has length  \begin{linenomath*}
	\[|qp_4|+\dots+|qp_n|+|qa|+|qb|>\delta(n-3)+ \omega + \omega> \delta (n-1)\geqslant \delta\cdot\len{T^*},\]
\end{linenomath*} where the inequalities hold by Lemma~\ref{qa-qb}, that fact that $w>\delta$, and \eqref{eq-upperbound-1}. In this case $\len{S_3}\geqslant \delta\cdot\len{T^*}$, where the vertex $c$ in the algorithm plays the role of $q$ in the analysis. If $x=a$, then the star that connects $a$ to $p_4, \dots, p_n$, $q$, and $b$ has length
\begin{linenomath*} \[|ap_4|+\dots+|ap_n|+|aq|+|ab|>\delta(n-3)+ \omega + 1>\delta(n-1)\geqslant \delta\cdot\len{T^*}.\] 
\end{linenomath*}The length of this star is not larger than the length of $D$ (by Lemma~\ref{Sa-Sb}), and thus $\len{D}\geqslant \delta\cdot\len{T^*}$. If $x=b$, an analogous argument implies that $D$ is a desired tree, proving the lemma for case (i).
	
Now consider case (ii) where $q\in X_1$. Our proof of this case is somewhat similar to that of case (i). Consider an arbitrary representative vertex $p_i$ from each $X_i$ with $i\in\{3,\dots,n\}$.  It is implied by  Lemma~\ref{Steiner-lemma} that
$\sum_{i=3}^{n}|p_ia|+|p_ib|+|p_iq|>3\delta(n-2).$
Let $x$ denotes the point in $\{a,b, q\}$ that has the largest total distance to all points $p_i$. The total distance of $p_i$'s to $x$ is at least $\delta(n-2)$. If $x=q$, then the star connecting $q$ to $p_3, \dots, p_n$, and $b$ has length at least $\delta(n-2)+ \omega> \delta (n-1)\geqslant \delta\cdot\len{T^*}$. In this case $\len{S_3}\geqslant \delta\cdot\len{T^*}$, where $c$ plays the role of $q$. If $x=a$, then the star connecting $a$ to $p_3, \dots, p_n$, and $b$ has length at least $\delta(n-2)+1> \delta\cdot\len{T^*}$. The length of this star is not larger than that of $D$, and thus $\len{D}\geqslant \delta\cdot\len{T^*}$. If $x=b$, by an analogous argument $D$ is a desired tree, proving the lemma for case (ii). 
Our proof of case (iii) is analogous to that of case (ii).
\end{proof}
\begin{lemma}
	\label{Q-empty}
	If no vertex of $\mathcal{X}$ lies in $Q$ then $\len{D}\geqslant \delta\cdot \len{T^*}$.
\end{lemma}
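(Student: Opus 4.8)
The plan is to prove $\len{D}\geqslant\delta\cdot\len{T^*}$ by pairing a lower bound on $\len{D}$ with a refined upper bound on $\len{T^*}$, both powered by the fact that the hypothesis confines every representative to the ellipse $E$. The first step is to make this confinement explicit. Every vertex of $\mathcal{X}$ lies in $L_1\cup L_2$, and by hypothesis none lies in $Q=(L_1\cup L_2)\setminus E$, so every vertex $v$ satisfies $|av|+|bv|\leqslant\omega+2\delta$. In particular, writing $v_i$ for the representative of $X_i$ chosen by $T^*$ and $m_i=\max\{|av_i|,|bv_i|\}$, all the $v_i$ sit in a thin region around the segment $ab$.

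For the lower bound I would read off the construction of $D$. Since $\len{D}=|ab|+\sum_{i=3}^{n}\max\{|ap_i|,|bq_i|\}$ and $p_i$ (resp.\ $q_i$) is a farthest vertex of $X_i$ from $a$ (resp.\ $b$), each summand dominates $\max\{|av_i|,|bv_i|\}=m_i$, giving $\len{D}\geqslant 1+\sum_{i=3}^{n}m_i$. The two boundary neighborhoods are harmless here: $v_1\in L_1$ and $v_2\in L_2$ force $m_1,m_2\leqslant 2\delta$, so dropping them costs only a bounded amount.

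The crux is an upper bound on $\len{T^*}$ that improves on the trivial $\len{T^*}\leqslant n-1$ of \eqref{eq-upperbound-1}, and this is exactly where the ellipse --- obtained through the Steiner-ratio helper Lemma~\ref{Steiner-lemma} --- enters indirectly. Every edge of $T^*$ joins two differently-colored vertices, hence has length at most $|ab|=1$; but summing the naive per-edge estimate $|uv|\leqslant \max\{|au|,|bu|\}+\max\{|av|,|bv|\}=m(u)+m(v)$ over the tree is useless, because it weights each vertex by its degree and loses everything at a high-degree centre. The reason a purely local argument cannot succeed is that two vertices lying on $ab$ can still be almost a full unit apart, so edge lengths cannot be shaved below $1$ vertex-by-vertex. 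What the confinement to $E$ buys is twofold: a vertex whose $m$-value is close to $\tfrac12$ is pinned near the midpoint of $ab$ (so a cluster of short-edge vertices cannot spread out), while no vertex can sit farther off the segment than the semi-minor axis of $E$ (so the off-axis configurations --- clusters high above $ab$ --- that would inflate $\len{T^*}$ relative to $\len{D}$ are excluded). I would therefore argue globally rather than edge-by-edge, using these two facts to obtain a bound of the form $\len{T^*}\leqslant \tfrac{1}{\delta}\bigl(1+\sum_{i=3}^{n}m_i\bigr)$.

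Combining the two estimates then closes the proof immediately: $\delta\cdot\len{T^*}\leqslant 1+\sum_{i=3}^{n}m_i\leqslant\len{D}$. The numerology is governed by the identity $\tfrac{\sqrt3}{2}(\omega+1)=3\delta$ fixed by the choice of $\omega$, precisely as in Lemma~\ref{Steiner-lemma}, so that the ellipse parameter $\omega+2\delta$ and the Steiner ratio conspire to produce the constant $\delta$; any residual slack, together with the small-$n$ regime, is absorbed through Lemma~\ref{Sa-Sb} and \eqref{eq-upperbound-1}. I expect the global upper bound on $\len{T^*}$ in the third paragraph to be the main obstacle, since it is the single step that genuinely needs the geometry of $E$ rather than the triangle inequality alone, and it is the step that must circumvent the degree blow-up inherent in any local charging scheme.
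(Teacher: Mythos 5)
Your plan has the right architecture (a lower bound on $\len{D}$ from the farthest-point construction, paired with an upper bound on $\len{T^*}$ that improves on \eqref{eq-upperbound-1} using the confinement to $E$), and your lower bound $\len{D}\geqslant 1+\sum_{i\geqslant 3}\max\{|ap_i|,|bq_i|\}$ is correct. But the proof has a genuine gap exactly where you yourself flag ``the main obstacle'': the claimed global bound $\len{T^*}\leqslant \tfrac{1}{\delta}\bigl(1+\sum_{i}m_i\bigr)$ is only asserted, not derived. You correctly observe that a per-edge charge $|uv|\leqslant m(u)+m(v)$ fails because of degree blow-up, and that the geometry of $E$ must be used ``globally,'' but you never say what that global argument is. Without it the lemma is not proved, and it is not clear that the inequality in the exact form you propose even follows from the two qualitative facts you list (vertices with small $m$-value are near the midpoint; no vertex is far above $ab$).

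The paper closes this gap quite differently, and more locally than you anticipate. It introduces the smaller lens $L'=D(a,\delta)\cap D(b,\delta)$ and lets $m$ be the number of neighborhoods lying entirely inside $L'$. The key geometric computation is that the farthest point of the admissible region $R=(L_1\cup L_2)\setminus Q$ from any point of $L'$ is realized by the explicit pair $(l,f)$ ($l$ the lowest point of $L'$, $f$ the top intersection of $\partial D(b,2\delta)$ with $\partial E$), with $|lf|\approx 0.9464<0.95$. Hence \emph{every} edge of $T^*$ incident to one of those $m$ neighborhoods has length at most $0.95$, giving $\len{T^*}\leqslant (n-m-1)+0.95m$. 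On the other side, the same dichotomy gives $\len{D}\geqslant \delta(n-m-2)+0.5m+1$, since an edge of $D$ to a neighborhood with a vertex outside $L'$ has length at least $\delta$ (that vertex is at distance $\geqslant\delta$ from $a$ or from $b$), while edges to neighborhoods inside $L'$ are only guaranteed length $0.5$. The ratio then follows by comparing the two linear functions of $m$. Note this is still an edge-by-edge charging scheme --- it avoids the degree blow-up not by a global argument but by charging $T^*$'s improvement to \emph{vertices} (one incident edge per confined neighborhood) rather than to all edges, and by accepting a weaker lower bound on $\len{D}$ for exactly those same neighborhoods. If you want to complete your proof, you would need to supply an argument of comparable concreteness; as written, the decisive step is missing.
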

\begin{proof}
In this case all vertices of $\mathcal{X}$ lie in the region $R=(L_{1} \cup L_{2}) \!\setminus\! Q$, as illustrated in Figure~\ref{double-lens}. 
Define the lens $L'=D(a,\delta)\cap D(b,\delta)$. Denote by $l$ the lowest point of $L'$. Disregarding symmetry, denote by $f$ the topmost intersection point of boundaries of $D(b,2\delta)$ and $E$ as in Figure~\ref{double-lens}. Consider the smallest disk $D_l$ with center $l$ that contains the entire region $R$. By basic geometry of circle-circle intersection (that boundaries of $D_l$ and $D(b,2\delta)$ intersect at two points) and circle-ellipse intersection (that boundaries of $D_l$ and $E$ intersect at four points---these four points are marked in Figure~\ref{double-lens}) we can verify that $D_l$ passes through $f$. Thus $f$ is the farthest point in $R$ from $l$. Conversely, it follows from circle-circle intersection that $l$ is the farthest point of $L'$ from $f$. Therefore $|lf|$ is an upper bound for the distance between any point in $L'$ and any point in $R$, and so is an upper bound on the length of any edge of $T^*$ with an endpoint in $L'$.

\begin{figure}[htb]
	\centering
	\includegraphics[width=.5\columnwidth]{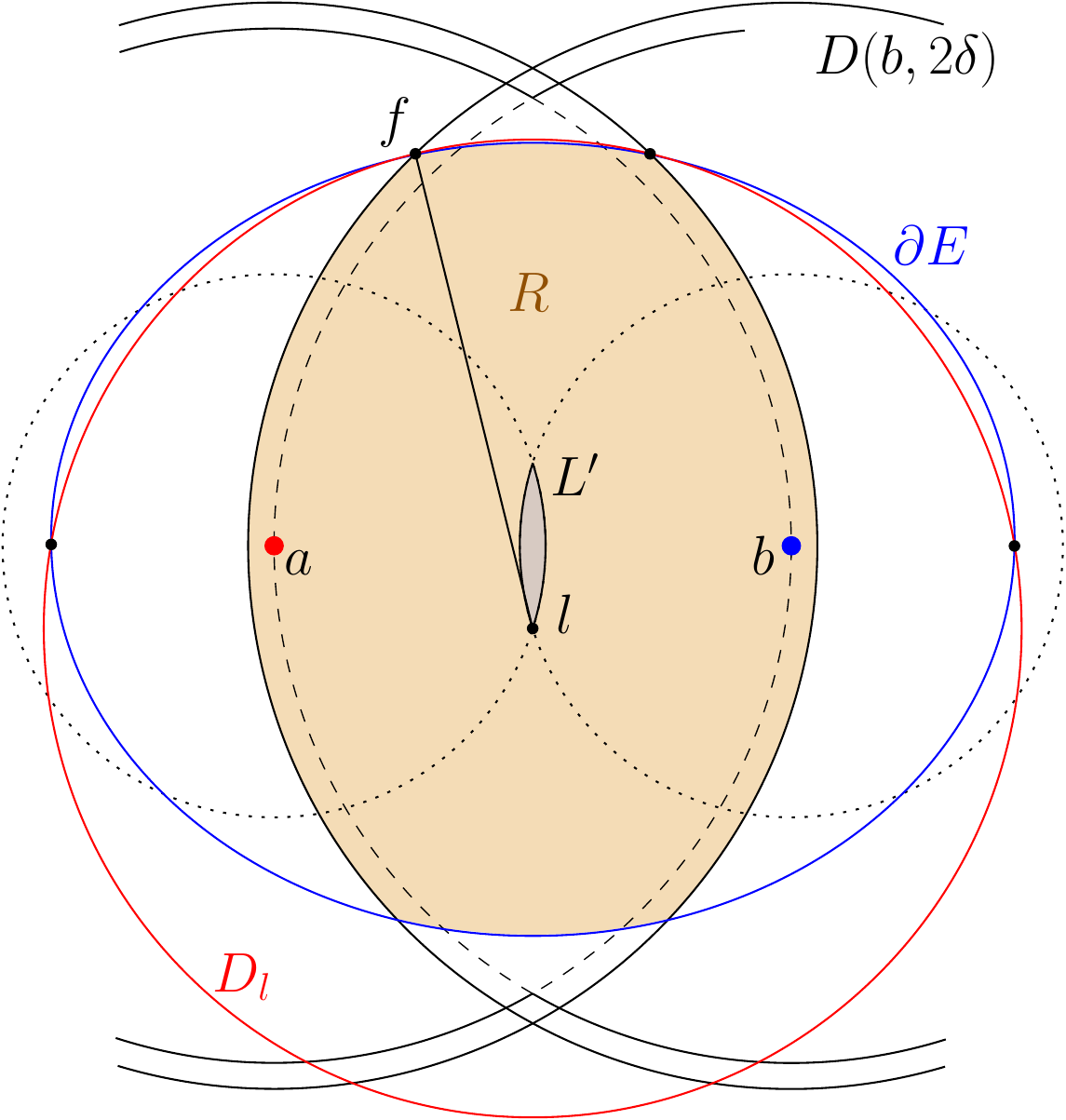}
	\caption{The length $|lf|$ is the largest possible for any edge having an endpoint in $L'$.}
	\label{double-lens}
\end{figure}

Since $f$ lies on $\partial E$ we have $|af|+|bf|=\omega+2\delta$, and since it lies on the boundary of $D(b,2\delta)$ we have $|bf|=2\delta$. Thus, $|af|=w$.
Therefore $f$ is an intersection point of boundaries of $D(a,\omega)$ and $D(b,2\delta)$ that are centered at $(0,0)$ and $(1,0)$, respectively. The point $l$ is an intersection point of boundaries of $D(a,\delta)$ and $D(b,\delta)$. By using the Pythagorean theorem and the circle equation, we can obtain the coordinates of $f$ and $l$, and give the following expression for $|lf|$:
\begin{linenomath*}
\[|lf|=\sqrt{\left(\frac{\omega^2-4\delta^2}{2}\right)^2 + \left(\sqrt{\omega^2-\left(\frac{1+\omega^2-4\delta^2}{2}\right)^2}+\sqrt{\delta^2-\frac{1}{4}}\right)^2}\approx0.9464<0.95.\]
\end{linenomath*}
We use $|lf|$ to obtain a better upper bound on the length of $T^*$.
Let $m$ be the number of neighborhoods $X_i$ that lie entirely in the interior of $L'$. Notice that $0\leqslant m\leqslant n-2$ ($X_1$ and $X_2$ do not lie in $L'$). The number of edges of $T^*$, that are incident to these neighborhoods, is at least $m$. Since each such edge is of length at most $|lf|$ ($<0.95$), we get
\begin{linenomath*}
\begin{equation}
\label{better-up}
\len{T^*}\leqslant (n-m-1)
+0.95 m=n -0.05m-1.
\end{equation}
\end{linenomath*}
Now we compute a lower bound on the length of $D$ in terms of $m$. One vertex of each neighborhood in $L'$ is connected to $a$ or $b$ (whichever is the farthest) in $D$. Each such connection has length at least $0.5$. There are $n-m-2$  neighborhoods that have vertices outside $L'$ (excluding $X_1$ and $X_2$). One vertex of each such neighborhood is connected to $a$ or $b$ (whichever is the farthest) in $D$. Each such connection has length at least $\delta$. The points $a$ and $b$ are connected to each other in $D$. Thus
\begin{linenomath*}
\begin{align}
\notag\len{D}&\geqslant \delta(n-m-2)+0.5m+1\\ \notag &=0.524n-0.024m-0.048\\\notag & >0.524(n
-0.05m-1)\\\notag &\geqslant 0.524\cdot \len{T^*},
\end{align}
\end{linenomath*}
where the equality holds by plugging $\delta=0.524$, and the last inequality holds by \eqref{better-up}. 
\end{proof}

The cases considered in Lemmas~\ref{aa-bb-large}, \ref{Q-not-empty}, and \ref{Q-empty} ensure that the length of one of $S_1$, $S_2$, $S_3$, and $D$ is at least $\delta\cdot\len{T^*}$. 
This concludes our analysis and proof of Theorem~\ref{neighborhood-thr}. 
\paragraph{Remark.} Although the stars $S_1$, $S_2$, $S_3$ are noncrossing, the double-star $D$ can have crossing edges. Thus the algorithm of this section cannot be used for the Max-NC-ST problem.
\paragraph{Inclusion of bichromatic diameter.} The simple $0.5$-approximation algorithm of Chen and Dumitrescu \cite{Chen2018} (described in Section~\ref{preliminaries}) always includes the bichromatic diametral pair $(a,b)$ in the solution. Chen and Dumitrescu argue that the approximation ratio of an algorithm that always includes a bichromatic diametral pair in the solution cannot be larger than $\sqrt{2-\sqrt{3}}\approx 0.517$.  We present an input instance that improves this upper bound to $0.5$, thereby showing that the ratio of the simple algorithm is tight, in this sense.

\begin{figure}[htb]
	\centering
	\includegraphics[width=.48\columnwidth]{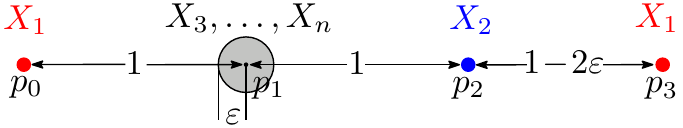}
	\caption{Illustration of the upper bound $0.5$ for inclusion of a bichromatic diametral pair.}
	\label{diametral-pair}
\end{figure} 

Consider four points $p_0=(0,0)$, $p_1=(1,0)$, $p_2=(2,0)$, and $p_3=(3-2\varepsilon,0)$ for arbitrary small $\varepsilon > 0$, e.g. $\varepsilon=1/n$. Our input instance consists of neighborhoods $X_1,\dots, X_n$ where $X_1=\{p_0, p_3\}$, $X_2=\{p_2\}$, and each of $X_3,\dots,X_n$ has exactly one point that is placed at distance at most $\varepsilon$ from $p_1$; see Figure~\ref{diametral-pair}. In this setting, $(p_0,p_2)$ is the unique bichromatic diametral pair. Consider any tree $T$ that contains the bichromatic diameter $p_0p_2$ (this means that $p_3$ is not in $T$). Any edge of $T$ incident to $X_3,\dots,X_4$ has length at most $1+\varepsilon$. Therefore $\len{T}\leqslant 2+(1+\varepsilon)(n-2)<n+1$. Now consider the tree $T^*$ that does not contain $p_0p_2$ but connects each of $X_2,\dots,X_n$ to $p_3$. The length of $T^*$ is at least $(1-2\varepsilon)+(2-3\varepsilon)(n-2)> 2n-6$. This would establish the upper bound $0.5$ on the approximation ratio because $\frac{\len{T}}{\len{T^*}}\leqslant \frac{n+1}{2n-6}$ which tends to $1/2$ in the limit.

\section{Maximum noncrossing spanning tree}
In this section we prove Theorem~\ref{noncrossing-thr}. Put $\delta = 0.519$.

Our $\delta$-approximation algorithm for the Max-NC-ST problem borrows some ideas from previous works \cite{Alon1995, Biniaz2018, Cabello2020, Dumitrescu2010} and combines them with some new ideas (including the use of the Steiner ratio) along with a more refined analysis. For such a well-studied problem, neither coming up with new ideas nor their combination with previous ideas is an easy task. 
We try to keep the proof short, but self-contained; we give a detailed description of our new ideas and a short description of borrowed ideas. To facilitate comparisons,  we give a proper citation for each part of the proof that overlaps with previous works, and we use the same notation as in the most recent related work \cite{Cabello2020}. 

Let $P\!\subset\! \mathbb{R}^2$ be the given point set of size $n$ and let $(u,v)$ be a diametral pair of $P$, throughout this section. After a suitable scaling assume that $|uv|=1$. Let $T^*$ be a longest noncrossing spanning tree of $P$. If $ab$ is a longest edge of $T^*$ then $|ab|\leqslant|uv|=1$, and thus 
\begin{linenomath*}
\begin{equation}
\label{T-star-bound}\len{T^*}\leqslant (n-1)|ab|\leqslant (n-1)|uv|\leqslant n-1. 
\end{equation}
\end{linenomath*}
	
Our plan is to construct a noncrossing spanning tree for $P$ of length at least $\delta\cdot\len{T^*}$. For a point $p\in P$, we denote by $S_p$ the star that connects $p$ to all other points of $P$. We start with the following simple lemma, proved in \cite{Dumitrescu2010}, which comes in handy in our construction.

\begin{lemma}
	\label{two-star-lemma}
	For any two points $p$ and $q$ in $P$ it holds that $\max\{\len{S_p},\len{S_q}\}\geqslant \frac{n}{2}|pq|$.
\end{lemma}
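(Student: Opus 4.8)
The plan is to prove Lemma~\ref{two-star-lemma} by a simple averaging argument over the two stars $S_p$ and $S_q$, combined with the triangle inequality. The key observation is that every point $r \in P$ contributes one edge to each of the two stars: the edge $pr$ to $S_p$ and the edge $qr$ to $S_q$. So if we add the lengths of the two stars, we are summing $|pr| + |qr|$ over all points $r$, and the triangle inequality gives a clean lower bound for each such pair because $|pr| + |qr| \geqslant |pq|$.

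Carrying this out, first I would write the total length of the two stars as a single sum over all points of $P$. Since $S_p$ connects $p$ to every other point and $S_q$ connects $q$ to every other point, I would express
\begin{linenomath*}
\[\len{S_p} + \len{S_q} = \sum_{r \in P}\bigl(|pr| + |qr|\bigr),\]
\end{linenomath*}
where for the term $r = p$ we have $|pp| = 0$ and $|qp| = |pq|$, and symmetrically for $r = q$, so both of these degenerate terms already contribute exactly $|pq|$ each and cause no trouble. Next I would apply the triangle inequality $|pr| + |qr| \geqslant |pq|$ to every one of the $n$ terms in the sum, which yields $\len{S_p} + \len{S_q} \geqslant n\,|pq|$. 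Finally, since the maximum of two quantities is at least their average, I would conclude $\max\{\len{S_p}, \len{S_q}\} \geqslant \tfrac12(\len{S_p} + \len{S_q}) \geqslant \tfrac{n}{2}|pq|$.

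The only point requiring a little care is the precise bookkeeping of which pairs appear in the sum. Depending on how one defines $\len{S_p}$ (as a sum over $r \neq p$), the counting must be set up so that each of the $n$ points of $P$ contributes exactly one triangle-inequality instance $|pr| + |qr| \geqslant |pq|$; one clean way is to note that the term for $r=p$ and the term for $r=q$ each already equal $|pq|$ on the nose, so including or excluding them consistently still produces $n$ copies of $|pq|$ in the final bound. Since this lemma is quoted from \cite{Dumitrescu2010} and is essentially a two-line computation, I do not expect any genuine obstacle; the main thing to verify is simply that the averaging step loses nothing beyond the unavoidable factor of $2$, which matches the well-known fact that a single star can be no better than a $0.5$-approximation.
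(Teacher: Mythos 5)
Your proposal is correct and matches the paper's own proof essentially verbatim: both sum $\len{S_p}+\len{S_q}=\sum_{r\in P}(|pr|+|rq|)$, apply the triangle inequality termwise to get $n|pq|$, and then bound the maximum by the average. Your extra remark about the degenerate terms $r=p$ and $r=q$ each contributing exactly $|pq|$ is a fine clarification of the bookkeeping the paper leaves implicit.
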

\begin{proof}
	By bounding the maximum with the average and then using the triangle inequality we get:
	\begin{linenomath*}
	\[\max\left\{\len{S_p},\len{S_q}\right\}\geqslant \frac{1}{2}\left(\len{S_p}+\len{S_q}\right)= \frac{1}{2}\sum_{r\in P}(|pr|+|rq|)\geqslant \frac{1}{2}\sum_{r\in P}|pq|= \frac{n}{2}|pq|.\quad\qedhere
	\]
	\end{linenomath*}
\end{proof}

From Lemma~\ref{two-star-lemma} and \eqref{T-star-bound} we have $\max\{\len{S_u},\len{S_v}\}\geqslant\frac{n}{2}|uv|=\frac{n}{2}>\frac{1}{2}\cdot\len{T^*}$. Therefore, the longer of $S_u$ and $S_v$ is a $0.5$ approximation of $T^*$. As pointed out by Alon \etal~\cite{Alon1995} the longest star may not give an approximation ratio better than $0.5$. To establish better ratios we need to consider trees that are not necessarily stars, and we need to incorporate more powerful ingredients.

Now we describe our $\delta$-approximation algorithm. It uses the noncrossing property of the optimal tree $T^*$ (similar to that of Cabello~\etal~\cite{Cabello2020}).

\paragraph{Algorithm approach.}Guess a longest edge $ab$ of $T^*$, say by trying all pairs of points in $P$. For each guess $ab$, construct seven noncrossing spanning trees as described in the rest of this section. Then report the longest tree, over all guesses for $ab$. %(The original algorithm of Cabello \etal~computes six trees.)
\vspace{15pt}

From now on we assume that $ab$ is a longest edge of $T^*$. In the rest of this section we describe how to construct the seven noncrossing spanning trees in such a way that the length of the longest tree is at least $\delta\cdot\len{T^*}$. Fix a parameter $d=\frac{1}{2\delta}$. 

\begin{lemma}
	\label{ab-small}
	If $|ab|\leqslant d$ then $ \max\left\{\len{S_u},\len{S_v}\right\}\geqslant \delta\cdot\len{T^*}$.
\end{lemma}
\begin{proof}
By Lemma~\ref{two-star-lemma}, the fact that $|uv|=1=2\delta\cdot d$, our assumption that $|ab|\leqslant d$, and \eqref{T-star-bound} we get
\begin{linenomath*} \[\max\left\{\len{S_u},\len{S_v}\right\}\geqslant \frac{n}{2}|uv|=\delta \cdot nd\geqslant \delta \cdot n|ab|>\delta\cdot\len{T^*}.\qedhere\]
\end{linenomath*}\end{proof}

Having Lemma \ref{ab-small} in hand, in the rest of this section we assume that $d\leqslant |ab|\leqslant 1$. 
After a suitable rotation and a translation assume that $a=(0,0)$ and $b=(|ab|,0)$. Define the lens $L=D(a,1)\cap D(b,1)$; see Figure~\ref{noncrossing-fig1}(a). Since the diameter of $P$ is $1$, all points of $P$ lie in $L$. Fix parameters $\omega=0.16$ and $\hat{\beta}=0.44$ and define 
\begin{linenomath*}\[\hat{\alpha}=\frac{2\delta+3\omega-2}{\omega-1}-\hat{\beta}\quad\quad\quad\lambda=\frac{6\delta}{\sqrt{3}}+1-|ab|\quad\quad\quad\gamma=\frac{(2\delta+\hat{\alpha}-1)|ab|}{\hat{\alpha}}.\]\end{linenomath*}
We use the parameter $\lambda$ along with the Steiner ratio to take care of the situation where some input points lie far from $a$ and $b$---this situation is a bottleneck for previous algorithms.

Define $E_1=\{x\in\mathbb{R}^2: |xa|+|xb|\leqslant \lambda\}$. The boundary $\partial E_1$ is an ellipse with foci $a$ and $b$. Put $Q=L\!\setminus\! E_1$, as depicted in Figure~\ref{noncrossing-fig1}(a). Lemma~\ref{Steiner-lemma-2}, below, plays an important role in improving the approximation ratio. We keep its proof short as it is somewhat similar to our proof of Lemma~\ref{Steiner-lemma}. 

\begin{lemma}
	\label{qa-qb-2}
	For any point $q\in Q$ it holds that $|aq|>\lambda-1$ and $|bq|>\lambda-1$.
\end{lemma}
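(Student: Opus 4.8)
The plan is to mirror the two-line argument behind Lemma~\ref{qa-qb}, since the region $Q=L\setminus E_1$ here plays exactly the role that $(L_1\cup L_2)\setminus E$ played there. The two ingredients are a pair of \emph{upper} bounds on the individual distances $|aq|$ and $|bq|$, coming from membership in the lens, together with a single \emph{lower} bound on their sum, coming from non-membership in the ellipse; combining them will immediately yield both inequalities.

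First I would use that $q\in Q\subseteq L=D(a,1)\cap D(b,1)$. By definition of the lens as the intersection of two unit disks centered at $a$ and $b$, this gives $|aq|\leqslant 1$ and $|bq|\leqslant 1$. (The bound is $1$ here, rather than the $2\delta$ that appeared in Lemma~\ref{qa-qb}, because in this section the relevant radius is the scaled diameter and every point of $P$ lies in $L$.)

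Next I would invoke that $q\notin E_1$. Since $E_1=\{x\in\mathbb{R}^2:|xa|+|xb|\leqslant\lambda\}$, lying outside $E_1$ means $|aq|+|bq|>\lambda$. Combining this strict lower bound on the sum with the two upper bounds yields $|aq|>\lambda-|bq|\geqslant\lambda-1$ and, by the symmetric computation, $|bq|>\lambda-|aq|\geqslant\lambda-1$, which is precisely the assertion.

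I do not expect any genuine obstacle in this lemma: it is a direct consequence of the definitions of $L$ and $E_1$, exactly parallel to Lemma~\ref{qa-qb}. The only point worth noting is that the bound is nonvacuous, i.e.\ $\lambda-1>0$; this holds because $\lambda-1=\frac{6\delta}{\sqrt{3}}-|ab|\geqslant\frac{6\delta}{\sqrt{3}}-1>0$ for $\delta=0.519$ and $|ab|\leqslant 1$. The substantive use of this distance bound is deferred to the Steiner-ratio argument of Lemma~\ref{Steiner-lemma-2}, for which the present inequalities serve merely as the input.
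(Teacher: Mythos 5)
Your proof is correct and is essentially identical to the paper's: membership in $L$ gives $|aq|\leqslant 1$ and $|bq|\leqslant 1$, non-membership in $E_1$ gives $|aq|+|bq|>\lambda$, and combining the two yields both inequalities. The extra remark that $\lambda-1>0$ is a nice sanity check but not needed for the statement itself.
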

\begin{proof}
	Since $q \in L$, $|aq|\leqslant 1$ and $|bq|\leqslant 1$. Since $q\notin E_1$, $|aq|+|bq|>\lambda$. Combining these inequalities yields $|aq|>\lambda - 1$ and $|bq|>\lambda - 1$.
\end{proof}

The next ``helper lemma'' is the place where we use the Steiner ratio.

\begin{lemma}
	\label{Steiner-lemma-2}
	For any two points $q\in Q$ and $p\in\mathbb{R}^2$ it holds that $|pa|+|pb|+|pq|> 3\delta$. 
\end{lemma}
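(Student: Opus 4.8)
The plan is to mirror the proof of Lemma~\ref{Steiner-lemma}: bound the Steiner minimal tree of $P=\{a,b,q\}$ from below via its minimum spanning tree and the Steiner ratio for three points. Since the three segments $pa$, $pb$, $pq$ together form a spanning tree of $P$, I have $|pa|+|pb|+|pq|\geqslant\len{\textrm{SMT}(P)}\geqslant\frac{\sqrt{3}}{2}\len{\textrm{Min-ST}(P)}$, where the second inequality is the Steiner ratio for three points of Gilbert and Pollak \cite{Gilbert1968}. Because $\frac{\sqrt{3}}{2}\cdot\frac{6\delta}{\sqrt{3}}=3\delta$, it therefore suffices to establish the single lower bound $\len{\textrm{Min-ST}(P)}>\frac{6\delta}{\sqrt{3}}$.

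For three points the minimum spanning tree equals the perimeter of the triangle $\bigtriangleup(a,b,q)$ minus its longest side, so I would split into cases according to which of $|ab|$, $|aq|$, $|bq|$ is largest. The identity driving every case is $\lambda+|ab|-1=\frac{6\delta}{\sqrt{3}}$, which holds directly from the definition $\lambda=\frac{6\delta}{\sqrt{3}}+1-|ab|$. If $|ab|$ is the longest side, then $\len{\textrm{Min-ST}(P)}=|aq|+|bq|>\lambda$ because $q\notin E_1$; since $|ab|\leqslant 1$ we have $\lambda\geqslant\frac{6\delta}{\sqrt{3}}$, giving the bound. If instead $|aq|$ (resp.\ $|bq|$) is the longest side, then $\len{\textrm{Min-ST}(P)}=|ab|+|bq|$ (resp.\ $|ab|+|aq|$), and Lemma~\ref{qa-qb-2} supplies $|bq|>\lambda-1$ (resp.\ $|aq|>\lambda-1$), so the sum exceeds $|ab|+\lambda-1=\frac{6\delta}{\sqrt{3}}$. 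Hence $\len{\textrm{Min-ST}(P)}>\frac{6\delta}{\sqrt{3}}$ in all three cases.

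Substituting this into the Steiner-ratio chain above yields $|pa|+|pb|+|pq|\geqslant\frac{\sqrt{3}}{2}\len{\textrm{Min-ST}(P)}>\frac{\sqrt{3}}{2}\cdot\frac{6\delta}{\sqrt{3}}=3\delta$, as required. The only real content is the uniform lower bound on the minimum spanning tree, and the point to watch is that the three cases collapse onto the same threshold $\frac{6\delta}{\sqrt{3}}$ precisely because $\lambda$ is defined so that $\lambda+|ab|-1=\frac{6\delta}{\sqrt{3}}$, with $|ab|\leqslant 1$ being exactly what closes the $|ab|$-longest case. I expect no serious obstacle here: beyond Lemma~\ref{qa-qb-2} and the Steiner ratio for three points no delicate estimation is needed, and the difficulty is conceptual rather than computational, namely the (initially counterintuitive) use of a lower bound on a \emph{minimum}-length object to drive the eventual \emph{maximum}-length construction.
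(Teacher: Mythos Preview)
Your proposal is correct and follows essentially the same approach as the paper: both bound $\len{\textrm{Min-ST}(\{a,b,q\})}>6\delta/\sqrt{3}$ by a case split and then apply the three-point Steiner ratio. The only cosmetic difference is that the paper phrases its two cases geometrically (whether $q$ lies in the lens $L'=D(a,|ab|)\cap D(b,|ab|)$ or not) while you phrase three cases combinatorially (which side of $\bigtriangleup(a,b,q)$ is longest); these are the same case split, and both reduce to $\lambda\geqslant 6\delta/\sqrt{3}$ and $|ab|+\lambda-1=6\delta/\sqrt{3}$ exactly as you note.
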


\begin{proof}
	Put $P'=\{a,b,q\}$, and define $L'=D(a,|ab|)\cap D(b, |ab|)$ as in Figure~\ref{noncrossing-fig1}(a). If $q\in L'\!\setminus\! E_1$ then $\textrm{Min-ST}(P')$ has edges $aq$ and $bq$, and thus $\len{\textrm{Min-ST}(P')}=|aq|+|bq|>\lambda\geqslant 6\delta/\sqrt{3}$. If $q\in L\!\setminus\!L'\!\setminus\! E_1$ then $\textrm{Min-ST}(P')$ has the edge $ab$ together with the shorter of $aq$ and $bq$, which we may assume it is $aq$ by symmetry; this case is depicted in Figure~\ref{noncrossing-fig1}(a). Thus $\len{\textrm{Min-ST}(P')}=|aq|+|ab|>\lambda-1+|ab|=6\delta/\sqrt{3}$, where the inequality is implied by Lemma~\ref{qa-qb-2}. Therefore in all cases we have $\len{\textrm{Min-ST}(P')}>6\delta/\sqrt{3}$.
	It follows from the Steiner ratio (for three points) that  
	\begin{linenomath*}
		\[|pa|+|pb|+|pq|\geqslant \len{\textrm{SMT}(P')}\geqslant \frac{\sqrt{3}}{2}\cdot\len{\textrm{Min-ST}(P')}> \frac{\sqrt{3}}{2}\cdot \frac{6\delta}{\sqrt{3}}=3\delta.\qedhere
	\]\end{linenomath*}
\end{proof}

\begin{figure}[htb]
	\centering
	\setlength{\tabcolsep}{0in}
	$\begin{tabular}{cc}
	\multicolumn{1}{m{.5\columnwidth}}{\centering\includegraphics[width=.39\columnwidth]{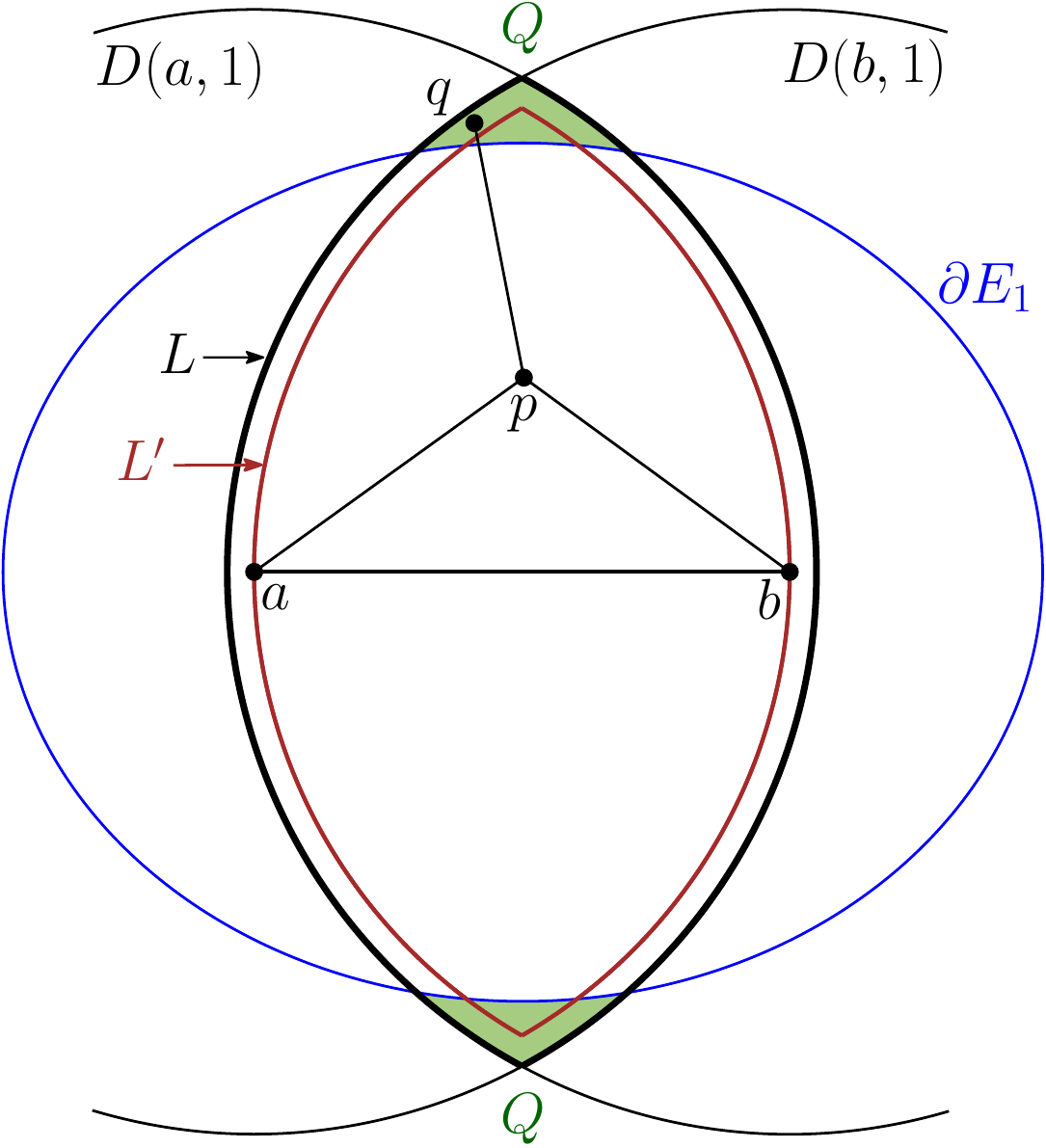}}
	&\multicolumn{1}{m{.5\columnwidth}}{\centering\vspace{0pt}\includegraphics[width=.4\columnwidth]{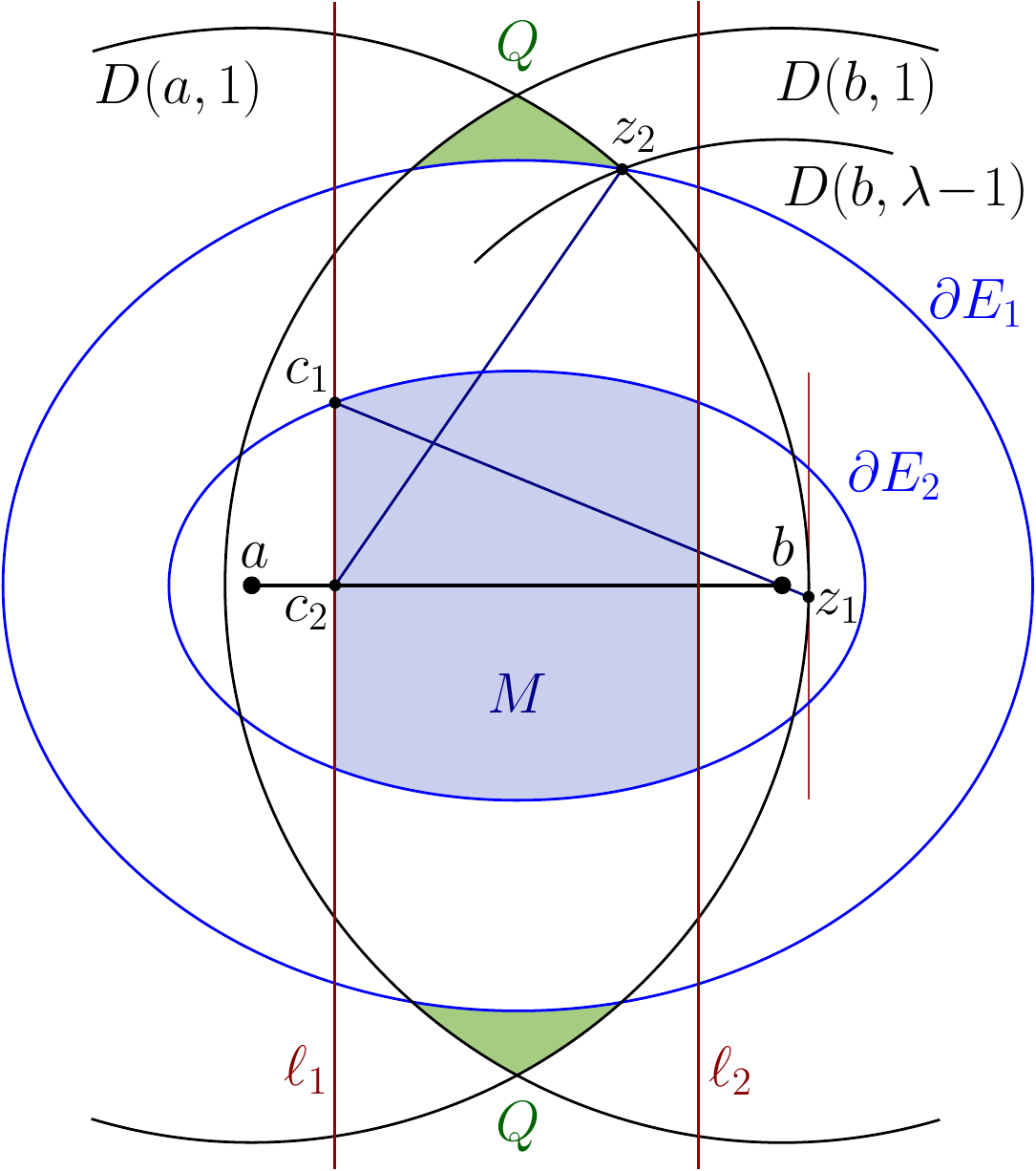}}
	\\
	(a)&(b)
	\end{tabular}$
	\caption{(a) Illustration of the case where $P\cap Q\neq\emptyset$ and $q\in L\!\setminus\!L'\!\setminus\! E_1$. (b) Illustration of the longest edges starting from $c_1$ and $c_2$ when $P\cap Q=\emptyset$.}
	\label{noncrossing-fig1}
\end{figure}

Lemmas \ref{old-helper} and \ref{beta-lemma} distinguish between two cases where $P\cap Q\neq \emptyset$ and $P\cap Q= \emptyset$. Both lemmas benefit from our helper Lemma~\ref{Steiner-lemma-2}. A combination of these three lemmas lead to a significant improvement on the approximation ratio. In Lemma~\ref{old-helper} we use the helper lemma directly to obtain a long tree for the case $P\cap Q\neq \emptyset$ which is a bottleneck case for previous algorithms. In Lemma \ref{beta-lemma} we use the helper lemme indirectly to obtain a better upper bound for the length of $T^*$. 

\begin{lemma}
	\label{old-helper}
	If $P\cap Q\neq \emptyset$ then there is a noncrossing spanning tree for $P$ of length at least $\delta\cdot\len{T^*}$.
\end{lemma}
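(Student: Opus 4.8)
The plan is to follow the blueprint of Lemma~\ref{Q-not-empty} from the neighborhoods section, while exploiting a simplification available here: since the input is a plain point set, the star centered at any vertex is automatically noncrossing, so I never need the (possibly crossing) double-star that was required in the neighborhoods analysis. Concretely, I would let $c$ be a point of $P$ maximizing $|ca|+|cb|$. Because $P\cap Q\neq\emptyset$ and every point of $Q$ lies outside $E_1$, some point of $P$ has $a$-plus-$b$ distance exceeding $\lambda$, and the maximizer $c$ inherits this. Since $c\in L$ automatically (all of $P$ lies in $L$) while $|ca|,|cb|\le 1$ force $c\notin E_1$, we get $c\in Q$. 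Moreover $c\notin\{a,b\}$, because $|aa|+|ab|=|ab|\le 1<\lambda$ places $a$ (and symmetrically $b$) inside $E_1$.

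Next I would invoke the helper Lemma~\ref{Steiner-lemma-2} with this $c$ in the role of $q$: for every $p\in P\setminus\{a,b,c\}$ we have $|pa|+|pb|+|pc|>3\delta$. Summing over the $n-3$ such points and bounding the maximum by the average, one of $a$, $b$, $c$ --- call it $x$ --- satisfies $\sum_{p\in P\setminus\{a,b,c\}}|xp|\ge\delta(n-3)$. The star $S_x$ that connects $x$ to all other points of $P$ is noncrossing, and its length is this sum plus the distances from $x$ to the two remaining members of $\{a,b,c\}$. The whole argument therefore reduces to checking that these two leftover distances always exceed $2\delta$, which upgrades the bound to $\delta(n-1)\ge\delta\cdot\len{T^*}$ via~\eqref{T-star-bound}.

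The case analysis is then short. If $x=c$, the leftover distances sum to $|ca|+|cb|>\lambda\ge 6\delta/\sqrt3>2\delta$, so $\len{S_c}>\delta(n-1)$. If $x=a$ (and symmetrically $x=b$), the leftover distances are $|ab|+|ac|$; here Lemma~\ref{qa-qb-2} gives $|ac|>\lambda-1=6\delta/\sqrt3-|ab|$, hence $|ab|+|ac|>6\delta/\sqrt3>2\delta$ and again $\len{S_a}>\delta(n-1)$. In every case a noncrossing star of length at least $\delta\cdot\len{T^*}$ exists (realized in the algorithm by taking the center among $a$, $b$, and the vertex $c$ maximizing $|\cdot a|+|\cdot b|$), which proves the lemma.

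I expect the only genuine subtlety to be the two structural identities that make the cases close cleanly: that $\lambda\ge 6\delta/\sqrt3$ (settling $x=c$) and the cancellation $|ab|+(\lambda-1)=6\delta/\sqrt3$ (settling $x=a,b$), both of which rest on the definition $\lambda=6\delta/\sqrt3+1-|ab|$ together with the trivial but essential inequality $6/\sqrt3>2$. The remaining care is only in verifying $c\in Q$ and that $a,b,c$ are distinct, which follows at once from $\lambda>1\ge|ab|$.
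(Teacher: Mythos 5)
Your proposal is correct and follows essentially the same route as the paper: invoke the Steiner-ratio helper lemma on $\{a,b,q\}$, average over the $n-3$ remaining points to select the best center $x\in\{a,b,q\}$, and then account for the two leftover edges of the star $S_x$. The only (harmless) difference is that you bound the \emph{sum} of the two leftover distances by $2\delta$ via the cancellation $|ab|+(\lambda-1)=6\delta/\sqrt{3}$, whereas the paper bounds each of $|ab|$, $|aq|$, $|bq|$ individually by $\delta$ using $\min\left\{\frac{1}{2\delta},\,\frac{6\delta}{\sqrt{3}}-1\right\}>\delta$.
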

\begin{proof}
	Consider any point $q\in P\cap Q$. By Lemma~\ref{qa-qb-2}, $|aq|\geqslant \lambda-1=6\delta/\sqrt{3}-|ab|$ and $|bq|\geqslant \lambda-1=6\delta/\sqrt{3}-|ab|$. Recall that $1/2\delta=d\leqslant|ab|\leqslant 1$ and $\delta=0.519$. Then
	\begin{linenomath*}
	\begin{equation}
	\label{Sx-eq-1}
	\min\left\{|ab|,|aq|,|bq|\right\}\geqslant \min\left\{\frac{1}{2\delta},\frac{6\delta}{\sqrt{3}}-|ab| ,\frac{6\delta}{\sqrt{3}}-|ab|\right\}\geqslant \min\left\{\frac{1}{2\delta},\frac{6\delta}{\sqrt{3}}-1 ,\frac{6\delta}{\sqrt{3}}-1\right\}>\delta.
	\end{equation}
	\end{linenomath*}
	Let $p_4,\dots,p_n$ denote the points in $P\setminus\{a,b,q\}$. It is implied by Lemma~\ref{Steiner-lemma-2} that
	\begin{linenomath*}
	\begin{equation}
	\label{Sx-eq-2}
	\sum_{i=4}^{n}|p_ia|+|p_ib|+|p_iq|>3\delta(n-3).
	\end{equation}
	\end{linenomath*}
	Denote by $x$ the point in $\{a,b,q\}$ that has the largest total distance to $p_4,\dots,p_n$, and denote by $y$ and $z$ the other two points. It is implied from \eqref{Sx-eq-2} that the total distance of $p_i$'s to $x$ is at least $\delta(n-3)$. In this setting, the star $S_x$ is a desired noncrossing tree because
	\begin{linenomath*}
	\[\len{S_x}=|xp_4|+\dots+|xp_n|+|xy|+|xz|> \delta(n-3)+\delta+\delta=\delta(n-1)\geqslant \delta\cdot\len{T^*},\]
	\end{linenomath*} where the first inequality is implied by \eqref{Sx-eq-1} and the second inequality is implied by \eqref{T-star-bound}.
\end{proof}

Subdivide $L$ into three parts by two vertical lines $\ell_1$ and $\ell_2$ at $\omega|ab|$ and $(1-\omega)|ab|$, respectively (a similar subdivision is used in \cite{Biniaz2019, Cabello2020, Dumitrescu2010}). Define $E_2=\{x\in\mathbb{R}^2: |xa|+|xb|\leqslant \gamma\}$. 
Let $M$ be the part of $L\cap E_2$ between $\ell_1$ and $\ell_2$. See Figure~\ref{noncrossing-fig1}(b). Let $\alpha$ be the fraction of points in $L\!\setminus\!E_2$, and let $\beta$ be the fraction of points in $M$. Observe that $1-(\alpha+\beta)$ fraction of points lie in parts of $L\cap E_2$ that are to the left of $\ell_1$ and to the right of $\ell_2$.

The next lemma is the place where we use the noncrossing property of $T^*$, similar to that of \cite[Lemma 3.2]{Cabello2020}.
Since Lemma \ref{old-helper} takes care of points in $Q$, we turn our attention to the case where no point of $P$ lies in $Q$. This constraint helps us to obtain a better upper bound on the length of $T^*$, which in turn leads to a better lower bound on the maximum length of the stars $S_u$ and $S_v$. Recall that $(u,v)$ is a diametral pair of $P$ and that $|uv|=1$.

\begin{lemma}
	\label{beta-lemma}
	If $P\cap Q=\emptyset$ and $\beta\geqslant\hat{\beta}$ then $\max\{\len{S_u},\len{S_v}\}\geqslant \delta\cdot\len{T^*}$.
\end{lemma}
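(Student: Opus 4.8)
The plan is to reduce the statement to the single inequality $\len{T^*}\le nd$. Applying Lemma~\ref{two-star-lemma} to the diametral pair $(u,v)$, for which $|uv|=1$, gives $\max\{\len{S_u},\len{S_v}\}\ge\frac n2|uv|=\frac n2=\delta\cdot nd$, because $d=\frac1{2\delta}$. Hence once we establish $\len{T^*}\le nd$ we obtain $\max\{\len{S_u},\len{S_v}\}\ge\delta\cdot\len{T^*}$ at once, which is the assertion. So the real task is to improve the trivial estimate $\len{T^*}\le n-1$ from \eqref{T-star-bound} to $\len{T^*}\le nd\approx0.963\,n$, and this improvement has to be paid for by the $\beta\ge\hat\beta$ fraction of points that lie in the middle band $M$, close to the midpoint of $ab$.

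First I would cash in the hypothesis $P\cap Q=\emptyset$. Since no point lies in $Q=L\setminus E_1$, every point of $P$ lies in $L\cap E_1$ and so has total distance at most $\lambda$ to $a$ and $b$; this is the promised indirect use of the helper Lemma~\ref{Steiner-lemma-2}, whose only purpose was to fix the value of $\lambda$ and hence the ellipse $E_1$, which shaves the top and bottom caps off the lens $L$. Points of $M$ are confined much more tightly still, as lying in $E_2$ and between $\ell_1$ and $\ell_2$ pins them near the center of $ab$.

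The core is a noncrossing estimate in the spirit of Cabello~\etal~\cite[Lemma 3.2]{Cabello2020}, whose purpose is to convert the abundance of middle-band points into an honest saving in $\len{T^*}$. Note that the geometry of $M$ alone does not suffice: in isolation a single point of $M$ could still be joined by an edge as long as $|ab|$. It is the noncrossing property of $T^*$ that forbids many middle points from simultaneously carrying near-maximal edges, since such edges would cross one another. Following Cabello~\etal, I would bound the total length of the edges that touch the middle band, the worst case being governed by the two extreme positions $c_1$ and $c_2$ drawn in Figure~\ref{noncrossing-fig1}(b), and thereby gain over the trivial bound a total saving proportional to $\beta n$; quantitatively this should read $\len{T^*}\le n-1-\kappa\,\beta n+O(1)$ for an explicit constant $\kappa>0$ determined by $\omega$ and the geometry of $M$ and $L\cap E_1$. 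Substituting $\beta\ge\hat\beta=0.44$, $\omega=0.16$, and $\delta=0.519$ should then confirm $\len{T^*}\le nd$; the particular values of $\gamma$ and $\hat\alpha$ are exactly the calibration that makes this closing inequality hold.

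I expect the noncrossing estimate to be the main obstacle. The intuitive statement that ``the middle band cannot carry too much total length'' must be turned into a precise combinatorial bound --- adapting the argument of Cabello~\etal~\cite[Lemma 3.2]{Cabello2020}, identifying the critical points $c_1,c_2$, and extracting the explicit constant from the circle- and ellipse-intersections that bound $M$ and $L\cap E_1$. By comparison, the reduction through Lemma~\ref{two-star-lemma} and the final numeric check against $d=\frac1{2\delta}$ are routine.
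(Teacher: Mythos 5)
Your overall route is the paper's: reduce to showing $\len{T^*}\leqslant nd=\frac{n}{2\delta}$ via Lemma~\ref{two-star-lemma}, use $P\cap Q=\emptyset$ to confine all points to $L\cap E_1$, and charge the saving to the $\hat{\beta}n$ points of $M$ whose incident $T^*$-edges cannot cross the $T^*$-edge $ab$. But the decisive step is exactly the one you defer as ``the main obstacle'': an explicit constant bounding the length of every $T^*$-edge that emanates from $M$. The paper obtains it by showing (via convexity) that the extremal such edge starts at $c_1$ or $c_2$, computing $|c_1z_1|$ and $|c_2z_2|$ as explicit functions $f_1,f_2$ of $|ab|$, checking that $f_1$ dominates $f_2$ and is maximized at $|ab|=d$ over the whole admissible range $[d,1]$, and arriving at $f_1(d)<0.914$, whence $\len{T^*}\leqslant(1-\hat{\beta})n+0.914\hat{\beta}n<0.963n$. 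Since the closing comparison is $0.5/0.963>0.519=\delta$, the margin is of order $2\cdot 10^{-4}$; asserting that substituting the constants ``should then confirm'' the inequality is therefore not a proof --- a per-edge bound of, say, $0.93$ in place of $0.914$ already breaks the argument. The unexecuted computation is the entire quantitative content of the lemma, so this is a genuine gap rather than an omitted routine check.

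Two smaller points. The mechanism is not that near-maximal edges from $M$ ``would cross one another''; it is that each such edge individually must avoid the single edge $ab\in T^*$, and this alone (together with its far endpoint lying in $L\cap E_1$) caps its length --- your own description via $c_1,c_2$ is the correct one, so the earlier sentence should be dropped. Also, $\hat{\alpha}$ plays no role in this lemma; the parameters that calibrate the bound are $\omega$ and $\gamma$ (through $E_2$, hence $c_1$ and $M$) and $\lambda$ (through $E_1$, hence $z_2$), and the bound must be verified uniformly over all admissible guesses $|ab|\in[d,1]$, not at a single value.
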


\begin{proof}
	Recall that $ab$ is a longest edge of $T^*$. Since $P\cap Q=\emptyset$, all points of $P$ lie in $L\cap E_1$. Let $c_1$ be the top intersection point of $\ell_1$ and $\partial E_2$, and let $c_2$ be the intersection point of $\ell_1$ and $ab$, as in Figure~\ref{noncrossing-fig1}(b). If we ignore symmetry then it follows from convexity that the longest possible edge starting in $M$ and not intersecting $ab$ has either $c_1$ or $c_2$ as an endpoint. The longest edge starting from $c_1$ and not intersecting $ab$ ends at the intersection point $z_1$ of the line through $c_1$ and $b$ with the boundary of $L$. The longest edge starting from $c_2$ ends at the intersection point $z_2$ of the boundaries of $L$ and $E_1$. See Figure~\ref{noncrossing-fig1}(b) for an illustration of these edges. Observe that $z_2$ is also an intersection point of the boundaries of $D(a,1)$ and $D(b, \lambda-1)$. By using the Pythagorean theorem and the ellipse and circle equations, we can obtain the coordinates of $c_1,c_2,z_1,z_2$, and give the following expressions for $|c_1z_1|$ and $|c_2z_2|$:
	\begin{linenomath*}
	\[
	|c_2z_2|=\sqrt{\left(\frac{1+|ab|^2-\left(\lambda-1\right)^2}{2|ab|}-\omega|ab|\right)^2 + 1-\left(\frac{1+|ab|^2-\left(\lambda-1\right)^2}{2|ab|}\right)^2}
	\]
	\end{linenomath*}
		
	\begin{linenomath*}
	\[
	|c_1b|=\sqrt{\left(1-\omega\right)^2|ab|^2+\frac{\left(\gamma^2-|ab|^2\right)}{\gamma^2}\left(\left(\frac{\gamma}{2}\right)^2-\left(\frac{|ab|}{2}-\omega|ab|\right)^2\right)}
	\]
	\end{linenomath*}
		
	\begin{linenomath*}
	\[|c_1z_1|\leqslant |c_1b|+\frac{\left(1-|ab|\right)|c_1b|}{\left(1-\omega\right)|ab|}
	\]
	\end{linenomath*}
	where the upper bound on $|c_1z_1|$ is obtained by extending $c_1z_1$ to intersect the vertical line through the rightmost point of $L$; this line is depicted in Figure~\ref{noncrossing-fig1}(b).
	
	\begin{figure}[htb]
		\centering
		\includegraphics[width=.9\columnwidth]{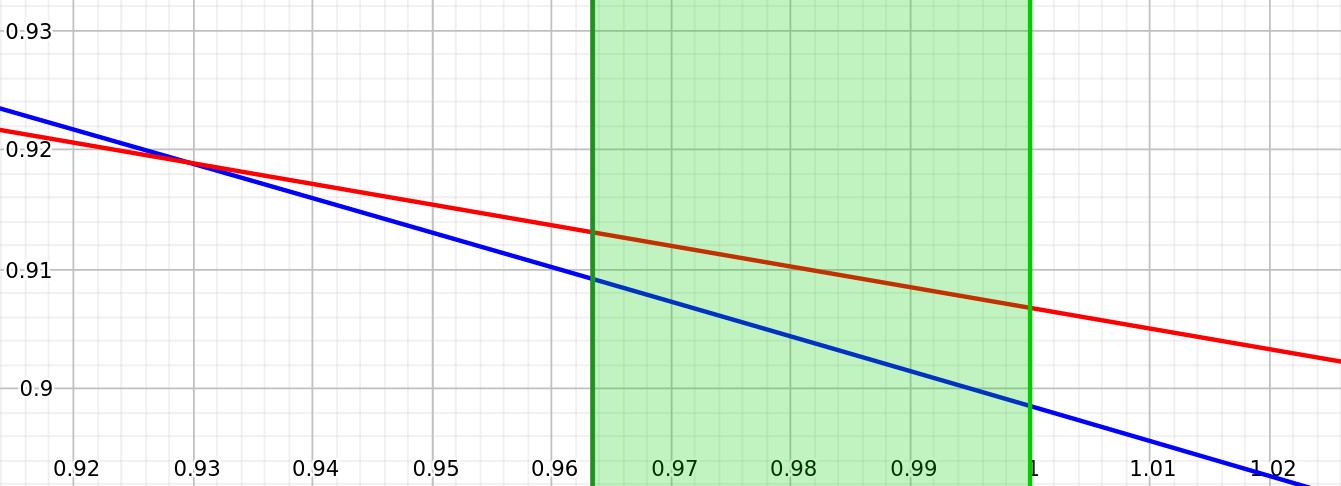}
		\caption{Plots of $|c_1z_1|$ (in red) and $|c_2z_2|$ (in blue) over $|ab|$ in interval $[d,1]$ (in green).}
		\label{plot}
	\end{figure} 
	
	Both $|c_1z_1|$ and $|c_2z_2|$ depend only on $|ab|$, and thus we denote them by functions $f_1(|ab|)$ and $f_2(|ab|)$, respectively. By considering the plots of these functions (Figure~\ref{plot}) it follows that for any $|ab|$ in interval $[d,1]$ we have $f_1(d)\geqslant f_1(|ab|)$ and $f_1(|ab|)>f_2(|ab|)$. Therefore, the longest possible edge starting in $M$ has length at most $f_1(d)$. By plugging the chosen constants in the above formula we get $f_1(d)\approx 0.913117<0.914$. Hence we obtain the following upper bound for the length of $T^*$:
	
	\begin{linenomath*}
	\[\len{T^*}\leqslant (1-\hat{\beta})n+0.914\hat{\beta}n=(1-0.086\hat{\beta})n<0.963n.\]
	\end{linenomath*}
	Recall that $\max\{\len{S_u},\len{S_v}\}\geqslant n/2$. Therefore,
	\begin{linenomath*}
	\[\frac{\max\{\len{S_u},\len{S_v}\}}{\len{T^*}}\geqslant\frac{0.5n}{0.963n}>0.519=\delta.\qedhere\]
	\end{linenomath*}
	%We note that in the algorithm of \cite{Cabello2020} an edge starting in $M$ can be as long  as $c_2z_3$, which is longer than both $c_1z_1$ and $c_2z_2$; the point $z_3$ is depicted in Figure~\ref{noncrossing-fig1}(b).
\end{proof}

The following two lemmas do not use the constraint $P\cap Q=\emptyset$.
The next lemma, presented in \cite{Cabello2020}, is adapted to work for our definition of $\hat{\alpha}$ and $\gamma$. 

\begin{lemma}
	\label{alpha-lemma}
	If $\alpha\geqslant\hat{\alpha}$ then $\max\{\len{S_a},\len{S_b}\}\geqslant \delta\cdot\len{T^*}$.
\end{lemma}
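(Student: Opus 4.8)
The plan is to bound the length of $T^*$ from above by distinguishing points according to which subregion of $L\cap E_2$ they occupy, then lower-bound $\max\{\len{S_a},\len{S_b}\}$ using Lemma~\ref{two-star-lemma}. Since we are in the regime where $\alpha$, the fraction of points lying in $L\setminus E_2$, is at least $\hat{\alpha}$, the intuition is that a large fraction of points are far from both $a$ and $b$ (they lie outside the ellipse $\partial E_2$ but inside the lens $L$), so the stars $S_a$ and $S_b$ — which connect $a$ (resp.\ $b$) to all other points — must be long. First I would recall that every point of $P$ lies in $L$, and that a point $x\in L\setminus E_2$ satisfies $|xa|+|xb|>\gamma$, whereas a point in $L\cap E_2$ satisfies $|xa|+|xb|\leqslant\gamma$ but still $|xa|+|xb|\geqslant|ab|$ by the triangle inequality.

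The key step is to sum $\len{S_a}+\len{S_b}=\sum_{p\in P}(|pa|+|pb|)$ and split the sum according to whether $p$ lies in $L\setminus E_2$ (an $\alpha$ fraction, contributing more than $\gamma$ each) or in $L\cap E_2$ (a $(1-\alpha)$ fraction, contributing at least $|ab|$ each). This gives
\begin{linenomath*}
\[\len{S_a}+\len{S_b}=\sum_{p\in P}\bigl(|pa|+|pb|\bigr)> \alpha n\,\gamma + (1-\alpha)n\,|ab|,\]
\end{linenomath*}
so by bounding the maximum with the average, $\max\{\len{S_a},\len{S_b}\}\geqslant \tfrac12\bigl(\alpha n\gamma+(1-\alpha)n|ab|\bigr)$. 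I would then substitute the definition $\gamma=\frac{(2\delta+\hat{\alpha}-1)|ab|}{\hat{\alpha}}$, which is engineered precisely so that this lower bound simplifies. Using $\alpha\geqslant\hat{\alpha}$ and the monotonicity of the expression in $\alpha$ (the coefficient of $\alpha$ is $\gamma-|ab|>0$), the worst case is $\alpha=\hat{\alpha}$, at which point the bracket becomes $n|ab|\bigl(\hat{\alpha}\cdot\frac{2\delta+\hat{\alpha}-1}{\hat{\alpha}}+(1-\hat{\alpha})\bigr)=n|ab|(2\delta+\hat{\alpha}-1+1-\hat{\alpha})=2\delta n|ab|$, yielding $\max\{\len{S_a},\len{S_b}\}\geqslant \delta n|ab|$.

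The final step is to compare against the upper bound \eqref{T-star-bound}, namely $\len{T^*}\leqslant (n-1)|ab|< n|ab|$, which immediately gives $\max\{\len{S_a},\len{S_b}\}\geqslant \delta n|ab|>\delta\cdot\len{T^*}$, as desired. The main obstacle I anticipate is verifying that the definition of $\gamma$ makes the algebra collapse exactly to $2\delta n|ab|$ and that the direction of monotonicity in $\alpha$ is correct (so that $\alpha\geqslant\hat{\alpha}$ is the useful inequality rather than its reverse); both hinge on the sign of $\gamma-|ab|=\frac{(2\delta-1)|ab|}{\hat{\alpha}}$, which is positive since $\delta>1/2$. I would double-check that the stars $S_a$ and $S_b$ as defined (connecting to \emph{all} other points) are automatically noncrossing, so that they are legitimate candidate trees for the Max-NC-ST solution, which holds because a star centered at a single vertex has no two edges that can cross.
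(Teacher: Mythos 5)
Your proof is correct and follows essentially the same route as the paper: split $\len{S_a}+\len{S_b}=\sum_{p\in P}(|pa|+|pb|)$ according to whether $p$ lies in $L\setminus E_2$ (contributing more than $\gamma$) or in $L\cap E_2$ (contributing at least $|ab|$ by the triangle inequality), bound the maximum by the average, use monotonicity in $\alpha$ (valid since $\gamma-|ab|=\frac{(2\delta-1)|ab|}{\hat{\alpha}}>0$) to reduce to $\alpha=\hat{\alpha}$, and observe that the definition of $\gamma$ makes the bound collapse to $\delta n|ab|>\delta\cdot\len{T^*}$. The algebra checks out and matches the paper's computation $\frac{|ab|+\hat{\alpha}(\gamma-|ab|)}{2|ab|}=\delta$ exactly.
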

\begin{proof}
The proof is somewhat similar to that of Lemma~\ref{two-star-lemma}, except now we have a better lower bound for points in $L\!\setminus \! E_2$, which is $\gamma$. Thus
\begin{linenomath*}
\[\max\{\len{S_a},\len{S_b}\}\geqslant \frac{\len{S_a}+\len{S_b}}{2}\geqslant \frac{\alpha n\cdot\gamma+(1-\alpha)n\cdot|ab|}{2}=\frac{n(|ab|+\alpha(\gamma-|ab|))}{2}.\]
\end{linenomath*}
This and the fact that $\len{T^*}<n|ab|$ by \eqref{T-star-bound}, imply that
\begin{linenomath*}
\[\frac{\max\{\len{S_a},\len{S_b}\}}{\len{T^*}}> \frac{n(|ab|+\alpha(\gamma-|ab|))}{2n|ab|}\geqslant \frac{|ab|+\hat{\alpha}(\gamma-|ab|)}{2|ab|}=\delta.\qedhere\]
\end{linenomath*}
\end{proof}

%A small portion of the improvement on the ratio comes from a more careful analysis of the construction in the following lemma. This construction is adapted from one initially presented by Biniaz \etal~\cite{Biniaz2019}. Cabello \etal~\cite{Cabello2020} also used this construction but did not include all points in the analysis. We include all input points in our analysis, and also refine it according to our choice of points $a$, $b$, and our definition of $\omega$, $\hat{\alpha}$, and $\hat{\beta}$. 

The construction in the following lemma adapted from one by Biniaz \etal~\cite{Biniaz2019}. We refine the construction according to our choice of points $a$, $b$, and refine the analysis according to our definition of $\omega$, $\hat{\alpha}$, and $\hat{\beta}$. 

\begin{lemma}
	\label{alpha-beta-lemma}
	If $\alpha\leqslant \hat{\alpha}$ and $\beta\leqslant \hat{\beta}$ then there is a noncrossing spanning tree for $P$ of length at least $\delta\cdot\len{T^*}$. 
\end{lemma}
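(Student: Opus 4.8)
The plan is to handle the remaining case $\alpha \le \hat\alpha$ and $\beta \le \hat\beta$, where few points lie far from $ab$ (in $L \setminus E_2$) and few points lie in the central strip $M$. This means a large fraction, at least $1 - \hat\alpha - \hat\beta = 1 - \hat\alpha - 0.44$, of the points lie in the two lateral parts of $L \cap E_2$ to the left of $\ell_1$ and to the right of $\ell_2$; these are the points close to $a$ or close to $b$ horizontally. First I would follow the construction of Biniaz \etal~\cite{Biniaz2019}, adapted to the endpoints $a, b$ of the guessed longest edge: partition $P$ by the vertical lines $\ell_1, \ell_2$ into a left group $P_a$ (points with $x$-coordinate at most $\omega|ab|$), a middle group, and a right group $P_b$. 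The idea is to build a noncrossing spanning tree that is essentially a double-star: connect the points of $P_a$ to $a$, connect the points of $P_b$ to $b$, include the edge $ab$, and attach the comparatively few middle points to $a$ or $b$ without creating crossings.

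The key steps, in order: (1) argue noncrossingness — since $P_a$ lies in the left portion of the lens and is connected only to $a$, and $P_b$ to $b$, and these two cones of edges are separated by the strip between $\ell_1$ and $\ell_2$ that contains $ab$, the two stars cannot cross each other, and the middle points can be attached to the nearer of $a, b$ so that their edges stay inside the appropriate half without crossing; (2) lower-bound the length. For the length bound I would exploit that a point in $P_a$ lies to the left of $\ell_1$, so its distance to $b$ is large (at least $(1-\omega)|ab|$ horizontally), and symmetrically a point in $P_b$ is far from $a$. The subtle point is that the double-star connects left points to $a$ (the \emph{near} endpoint), not to $b$; so I would instead use an averaging trick over two complementary double-stars — one connecting $P_a$-to-$a$/$P_b$-to-$b$ and its mirror connecting $P_a$-to-$b$/$P_b$-to-$a$ — or more directly bound each edge from below using the geometry of $E_2$ and the strip, then sum the contributions weighted by the fractions $\alpha, \beta, 1-\alpha-\beta$.

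The precise accounting should yield a total length of the form $n \cdot g(\alpha, \beta, |ab|)$ for an explicit $g$, and the goal is to show $g \ge \delta \cdot |ab|$ (using $\len{T^*} < n|ab|$ from \eqref{T-star-bound}), or more simply $\len{(\text{tree})} \ge \delta(n-1) \ge \delta \cdot \len{T^*}$. Since the hypothesis bounds $\alpha$ and $\beta$ from above, and the per-point length contributions are increasing as points move toward the lateral regions (far from the opposite endpoint), the worst case occurs at $\alpha = \hat\alpha$, $\beta = \hat\beta$; I would verify the inequality there by plugging in the chosen constants $\omega = 0.16$, $\hat\beta = 0.44$, $\delta = 0.519$, and the resulting $\hat\alpha$, checking it holds across the admissible range $d \le |ab| \le 1$ (again monotonicity in $|ab|$, as in the plots used for Lemma~\ref{beta-lemma}, reduces this to an endpoint check).

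The main obstacle I expect is establishing noncrossingness of the attachment of the middle points together with tightly bounding the lengths of their edges: the middle points sit in the strip containing $ab$, so their edges to $a$ or $b$ are short and can conflict with both lateral stars, and a careless assignment will either create a crossing or contribute too little length. The refined analysis must route each middle point to whichever of $a, b$ keeps the tree planar while still guaranteeing its edge has length at least $\omega|ab|$ (or the relevant lower bound), and the calibration of $\hat\alpha = \frac{2\delta + 3\omega - 2}{\omega - 1} - \hat\beta$ is exactly what makes the length bookkeeping close at the boundary $\alpha = \hat\alpha$, $\beta = \hat\beta$; getting that algebra to land on $\delta$ is where the care lies.
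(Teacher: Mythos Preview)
Your overall plan has the right shape---build two symmetric trees $T_a,T_b$ and average---but the specific construction you describe will not work, and the gap is exactly at the place you flagged as ``subtle.''

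You propose two complementary double-stars: one connecting $P_a$ to $a$ and $P_b$ to $b$, and its mirror connecting $P_a$ to $b$ and $P_b$ to $a$. The first of these is noncrossing but useless for length (a left point can sit arbitrarily close to $a$, so its edge to $a$ contributes essentially nothing). The second would give long edges---each at least $(1-\omega)|ab|$---but it is \emph{not} noncrossing: every edge from a left point to $b$ crosses every edge from a right point to $a$, since all of these edges traverse the central strip between $\ell_1$ and $\ell_2$ in opposite directions. So you cannot average over these two trees: only one of them is a legal candidate, and that one is too short.

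The paper's construction of $T_a$ avoids this by a different routing. It connects the right points $p_1,\dots,p_m$ directly to $a$ (edges of length $\ge(1-\omega)|ab|$), and then connects each left point not to $b$ but to one of the $p_i$, chosen according to the angular wedge (between consecutive rays $ap_i,ap_{i+1}$) in which the left point lies. These ``blue'' edges stay inside their wedges, so they do not cross one another or the red edges, and each still has length at least $(1-2\omega)|ab|$ because its endpoints lie on opposite sides of the strip. Summing over $T_a$ and its symmetric counterpart $T_b$, each lateral point contributes $(1-\omega)|ab|+(1-2\omega)|ab|=(2-3\omega)|ab|$, while the $\beta'\le\hat\alpha+\hat\beta$ fraction of middle points are handled by a Lemma~\ref{two-star-lemma}--style averaging inside each subregion, contributing at least $(1-2\omega)|ab|/2$ each. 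This is what makes the algebra land on
\[
\max\{\len{T_a},\len{T_b}\}\ \ge\ \frac{2-3\omega+(\omega-1)(\hat\alpha+\hat\beta)}{2}\,n|ab|\ =\ \delta\, n|ab|\ >\ \delta\cdot\len{T^*},
\]
and explains the otherwise mysterious definition $\hat\alpha=\frac{2\delta+3\omega-2}{\omega-1}-\hat\beta$. Your missing ingredient is precisely this wedge-based routing through the opposite-side points $p_i$, which simultaneously secures planarity and long edges.
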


\begin{proof}
We keep the description short. We construct two trees $T_a$ and $T_b$ such that the longer one is a desired tree. We describe the construction for $T_a$; the construction of $T_b$ is analogous.
\let\qed\relax\end{proof}
\vspace{-8.5pt}
\begin{wrapfigure}{r}{1.75in} 
\centering
\vspace{+1pt} 
\includegraphics[width=1.6in]{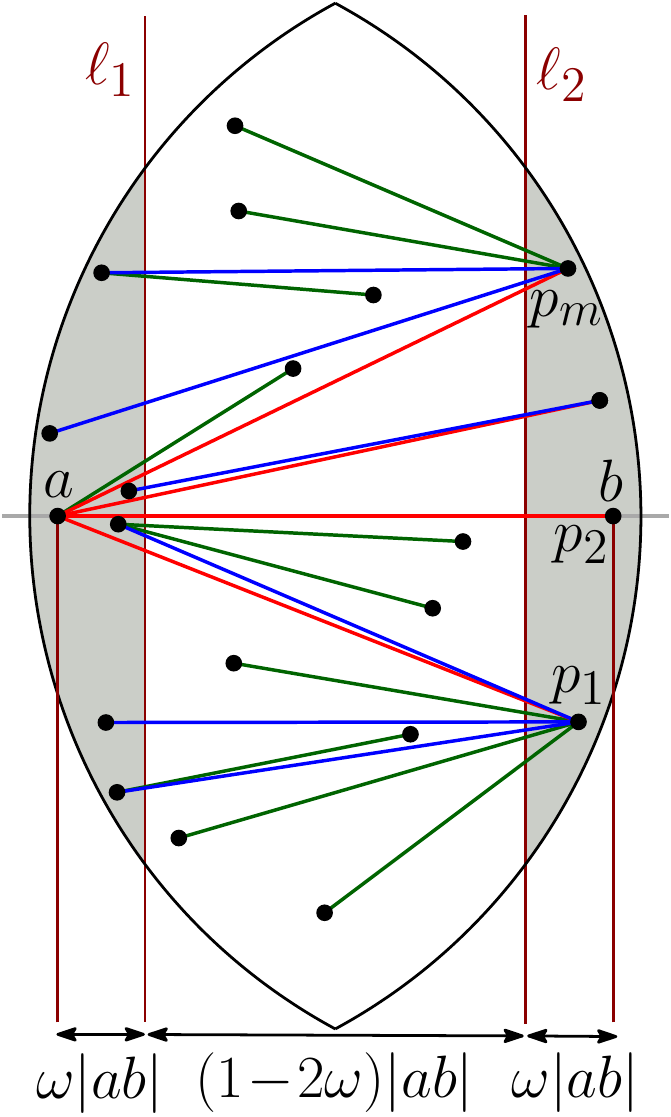} 
%\label{local-not-max-fig}
%\caption{The red matching is 2-local maximum, and the blue matching is global maximum.}
\vspace{-5pt} 
\end{wrapfigure}\noindent %We connect the points above and below the $x$-axis separately, so that the edges above do not cross the ones below. 
\indent See the figure to the right for an illustration. Start by connecting all points to the right of $\ell_2$ to $a$ (red edges). Each such edge has length at least $(1-\omega)|ab|$. Let $ap_1,\dots, ap_m$ be these edges in radial order around $a$ as in the figure; notice that $m\geqslant 1$ because $b$ is to the right of $\ell_2$. Now we connect the points to the left of $\ell_1$: connect all points below $ap_1$ (that are not above $ab$) to $p_1$, connect all points above $ap_m$ (that are not below $ab$) to $p_m$, and connect all points between two consecutive edges $ap_i$ and $ap_{i+1}$ to $p_i$ for $1\leqslant i\leqslant m-1$ (blue edges). Each new edge has length at least $(1-2\omega)|ab|$. 
	
Finally we connect the points in the region between $\ell_1$ and $\ell_2$. Let $\beta'$ be the fraction of points in this region. This region is subdivided into subregions  by the current (red and blue) edges of $T_a$. Each subregion is bounded by one or two edges of $T_a$ in such a way that at least one edge is fully visible from the interior of the subregion. Connect all points in each subregion to the endpoint (of the visible edge) that gives a larger total distance; these new edges are shown in green. By an argument similar to the proof of Lemma~\ref{two-star-lemma} one can show that the total length of the new edges is at least $\beta'n(1-2\omega)|ab|/2$. 
 
By the definition of $\alpha$, $\beta$, and $\beta'$ it holds that $\beta'\leqslant \alpha+\beta$. Since $\alpha\leqslant \hat{\alpha}$ and $\beta\leqslant \hat{\beta}$, we have $\beta'\leqslant \hat{\alpha}+\hat{\beta}$. The total fraction of points to the left of $\ell_1$ and to the right of $\ell_2$ is $1-\beta'$. Thus
\begin{linenomath*}
\begin{align}
\notag \len{T_a}+\len{T_b}&\geqslant (1-\beta')n(2-3\omega)|ab|+\beta'n(1-2\omega)|ab|\\ \notag
&=(2-3\omega+(\omega-1)\beta')n|ab|\\ \notag &\geqslant (2-3\omega+(\omega-1)(\hat{\beta}+\hat{\alpha}))n|ab|,
\end{align}
\end{linenomath*} 
where the last inequality holds because $\omega-1<0$. Therefore 
\begin{linenomath*}
\[
\pushQED{\qed} 
\max\left\{\len{T_a}+\len{T_b}\right\}\geqslant\frac{2-3\omega+(\omega-1)(\hat{\beta}+\hat{\alpha})}{2}\cdot n|ab|=\delta \cdot n|ab|> \delta\cdot\len{T^*}.\qedhere
\popQED\]\end{linenomath*}
%\vspace{1pt}

The cases considered in Lemmas \ref{ab-small}, \ref{old-helper}, \ref{beta-lemma}, \ref{alpha-lemma}, and \ref{alpha-beta-lemma}  ensure that at least one of $S_u$, $S_v$, $S_a$, $S_b$, $T_a$, $T_b$, and $S_q$ (introduced in Lemma~\ref{old-helper}) has length at least $\delta\cdot\len{T^*}$. This concludes our analysis and proof of Theorem~\ref{noncrossing-thr}.

\section{Conclusions}

A natural open problem is to improve the presented bounds ($0.524$ for the Max-ST-NB problem and $0.519$ for the Max-NC-ST problem) further by employing more new ideas.
We believe there is a possibility to slightly improve both bounds by discretization. For example in our Max-ST-NB (resp. Max-NC-ST) algorithm if the points in the lens $L'$ (resp. the region $M$) are close to $ab$ then one could obtain a better upper bound on the length of $T^*$ otherwise obtain a better lower bound on the length of the double-star $D$ (resp. the star $S_a$ or $S_b$).
However, the improvement would be minor, and the required case analysis may hide the impact and beauty of the main techniques. 

The improved approximation ratios are obtained mainly by employing the Steiner ratio, which has not been used in this context earlier. It would be interesting to see if the Steiner ratio can be used to design better algorithms for other related problems.

%\paragraph{Acknowledgement.}

\bibliographystyle{abbrv}
\bibliography{Long-Tree-Neighborhood}
\end{document}